\newcommand{\Intervalle}[2]{\llbracket #1,#2 \rrbracket}
\tikzstyle{every picture}=[
\newcommand{\Pp}[2]{\begin{tikzpicture}[baseline=0pt,font=\normalsize]
  \node[anchor=base] (V0) {};
  \draw[decorate,decoration={snake,amplitude=.4mm,segment length=2mm}](0,0.1) -- (0.5,0.1);
\end{tikzpicture}_{#1}\big(#2\big)}
\newcommand{\Ppn}[1]{\begin{tikzpicture}[baseline=0pt,font=\normalsize]
  \node[anchor=base] (V0) {};
  \draw[decorate,decoration={snake,amplitude=.4mm,segment length=2mm}](0,0.1) -- (0.5,0.1);
\end{tikzpicture}_{#1}}
\newcommand{\Card}[1]{\# #1}
\newcommand{\Dec}[2]{{\displaystyle\mathop\ggg^{#1}}(#2)}
\newcommand{\Decn}[0]{\ggg}
\newcommand{\Decnk}[1]{{\displaystyle\mathop\ggg^{#1}}}
\newcommand{\Shift}[2]{{\displaystyle\mathop\circleright^{#1}(#2)}}
\newcommand{\ShiftDiam}[2]{{\displaystyle\mathop\Diamondright^{#1}(#2)}}
\newcommand{\ShiftDiamnk}[1]{{\displaystyle\mathop\Diamondright^{#1}}}
\newcommand{\Shiftnk}[1]{{\displaystyle\mathop\circleright^{#1}}}
\newcommand{\Shiftn}[0]{\circleright}
\newtheorem{definition}{Definition}
\newtheorem{lemma}{Lemma}
\newtheorem{example}{Example}
\newtheorem{proposition}{Proposition}
\newtheorem{theorem}{Theorem}
\newtheorem{corollary}{Corollary}
\def\goth{\mathfrak}
\def\N{{\mathbb N}}
\begin{document} 

\title{Some Combinatorial Operators in Language Theory}
\author{J.-G.~Luque, L.~Mignot and F.~Nicart}
\date{\today}

 \maketitle
 
 \tableofcontents
 
\begin{abstract}
Multitildes are regular operators that were introduced by Caron \emph{et al.} in order to increase the number of Glushkov automata.
In this paper, we study the family of the multitilde operators from an algebraic point of view using the notion of operad. This leads to a combinatorial description of already known results as well as new  results on  compositions, actions and enumerations.
\end{abstract}

\section{Introduction}
Regular expressions have been studied from several years since they have numerous applications: pattern matching, compilation, verification, bio-informatics \emph{etc.}.
From a well known result (Kleene \cite{Kle56}), regular languages can be represented by both automata and regular expressions. From an expression, it is possible to compute an automaton whose number of states is a linear function of the alphabetical width (\emph{i.e.} the number of occurrences of alphabet symbols) \cite{MY60,Glu61,Ant96,CZ01a}. In the opposite direction, there exists no construction providing the linearity. For instance, Ehrenfeucht and Zeiger \cite{EZ76} showed a one parameter family of automata whose shortest equivalent regular expressions have a width  exponentially growing with the numbers of states. However this property occurs when the automaton is the Gluskov automaton of an expression. Note the characterization of such an automaton is due to Caron and Ziadi \cite{CZ97}.\\
  Multitilde operators have been introduced by Caron~\emph{et al.}~\cite{CCM11e} in the aim to increase expressiveness of expressions for a bounded length. The study of the equivalence of two multitilde expressions leads naturally to a notion of composition. On the other hand, there is an algebraic way to represent the compositions of operators, namely operads. The aim of this paper is to describe some properties of multitildes in terms of operads.

Usually, operads are used to encode types of algebras by describing the universal operations which act on the elements of any algebras of a given type together with the ways of composing them. To be more concise, an operad is given by a set of symbols (the operations) and a composition law which satisfies some rules such as associativity. Such a study of the compositions of operations appeared earlier in the work of M. Lazard \cite{Laz55} and was referred as \emph{analyseurs}.
The word \emph{operad} is the contraction of \emph{operations} and \emph{monad}. This terminology first appeared  in the field of algebraic topology in a paper of  May about the geometry of iterated loop spaces \cite{May72}. The notion arose  simultaneously in Kelly's categorical works on coherence and the definitions used in \cite{May72} have been constructed after conversations between the two mathematicians.\\
In the 1990's, the interest in the topic was renewed  with the works of Ginzburg and Kapranov  on the Koszul duality \cite{GK94}. 
One of the recent developments of this theory deals with Hopf algebras and combinatorics \cite{Cha07}.\\	
Readers interested by the history of this notion and its recent development can refer to  \cite{May96,Lei04,LV10}.

Several materials about operads are recalled in Section \ref{operad}.
In his PhD thesis \cite{Mig10}, one of the authors  introduced a composition on multitildes. We show (Section \ref{multitilde})
that this confers an operadic structure to the set of multitildes. Throughout the paper, we describe derived structures (isomophic operads, suboperads and quotient operads) which allow us to model several behaviors and properties of the multitilde operators. For instance, we define in Section \ref{SecActLan} an operadic structure on boolean vectors and use it to describe the action of multitildes on languages. In Section \ref{SecTRAPO}, we introduce an operadic structure on POSets in the aim to formalize the equivalence of the actions of two multitildes.
In Section \ref{SecCoPer}, we show that the representation by POSets is optimal in the sense that two different operators have different actions on $k$-tuples of languages. As a consequence, we enumerate the different operations which can be encoded by a multitilde. Finally, we give applications to the representation of finite languages and  also regular languages.

\section{What are operads?\label{operad}}
Operad theory has been developed in the aim to study prototypical algebras that model classical properties such as commutativity, associativity \emph{etc.}. In most cases, operads are considered to study algebras with an underlying vector space but the structure can be more generally used to study $\mathbb K$-modules on monoid ($\mathbb K$ semiring). In the most general context, operads are described in terms of category theory. In our case, since we apply the operad theory to the theory of languages, we will set $\mathbb K=\mathbb B$, the boolean semiring. We have adapted the definitions of this section to the boolean context to give a simplified version of the theory but most of them admits well-known generalizations. More precisely, we describe Set-operads that are operads whose underlying graded spaces are simply sets.
\subsection{Operadic structures}
Operads are algebraic structures which mimic the composition of $n$-ary operators. More explicitly, the construction starts with an underlying graded set ${\goth P}=\bigcup_{n\in\N}{\goth P}(n)$; elements of ${\goth P}(n)$ are called $n$-ary operations. This set is endowed with  functions 
$$\circ: {\goth P}(n)\times{\goth P}(k_1)\times\dots\times{\goth P}(k_n)\rightarrow {\goth P}(k_1+\dots+k_n)$$
called compositions and has a special element ${\bf 1}\in{\goth P}(1)$ called identity. This confers to ${\goth P}$ a structure of multicategory with one object \cite{Lei04}; more precisely the operations of the operad are the morphisms of the multicategory.\\
Furthermore, the compositions satisfy the two rules:
\begin{enumerate}
\item Associativity:
\begin{equation}\label{associativity}
\begin{array}{l}{\bf p}\circ\left({\bf p}_1\circ \left({\bf p}_{1,1},\dots,{\bf p}_{1,k_1}\right),\dots,
{\bf p}_n\circ \left({\bf p}_{n,1},\dots,{\bf p}_{n,k_n}\right)\right)=\\\left({\bf p}\circ \left({\bf p}_1,\dots,{\bf p_n}\right)\right)\circ\left({\bf p}_{1,1},\dots,{\bf p}_{1,k_1},\dots,{\bf p}_{n,1},\dots,{\bf p}_{n,k_n}\right)\end{array}
\end{equation}
\item Identity:
\begin{equation}\label{identity}
{\bf p}\circ\left({\bf 1},\dots,{\bf 1}\right)={\bf 1}\circ {\bf p}={\bf p}.
\end{equation}
\end{enumerate}
The structure of operad is easier to manipulate if we introduce partial composition operations $\circ_i$ which split the composition. These operations are defined by:
\begin{equation}\label{circi}
\begin{array}{rrcl}
\circ_i:&{\goth P}(m)\times {\goth P}(n)&\rightarrow&{\goth P}(m+n-1)\\
&({\bf p}_1,{\bf p}_2)&\rightarrow&{\bf p_1}\circ_i{\bf p_2}:={\bf p_1}\circ\left(\underbrace{{\bf 1},\dots,{\bf 1}}_{i-1\times},{\bf p}_2,\underbrace{{\bf 1},\dots,{\bf 1}}_{m-i-1\times}\right) 
\end{array}\end{equation}
when $1\leq i\leq n$.\\
Let ${\bf p}_1\in{\goth P}(m)$, ${\bf p}_2\in{\goth P}(n)$ and ${\bf p}_3\in{\goth P}(q)$. The partial compositions satisfy the two associative rules which are deduced from the definition (\ref{circi}) and the associativity of the compositions (\ref{associativity}):
\begin{enumerate}
\item Associativity 1:\\
If $1\leq j<i\leq n$  then
\begin{equation}\label{ass1}
\left({\bf p}_1\circ_i {\bf p}_2\right)\circ_j {\bf p}_3=\left({\bf p}_1\circ_j {\bf p}_3\right)\circ_{i+q-1} {\bf p}_2
\end{equation}
\item Associativity 2:\\
If $i\leq n$ then 
\begin{equation}\label{ass2}
{\bf p}_1\circ_j\left({\bf p}_2\circ_i{\bf p}_3\right)=\left({\bf p}_1\circ_{j}{\bf p}_2\right)\circ_{i+j-1}{\bf p}_3.
\end{equation}
\end{enumerate}
The composition is graphically interpreted by grafting trees together; a $n$-array operation is represented by a $n$-ary tree and a composition $\circ_i$ consists in grafting  the root of a tree onto the $i$th leaf of another tree, the resulting tree is associated to an element of ${\goth P}$.\\
Consider two operations ${\bf p}_1\in{\goth P}(m)$ and ${\bf p}_2\in{\goth P}(n)$:

\begin{center}
\begin{tikzpicture}
\tikzstyle{VertexStyle}=[
]

\SetUpEdge[lw = 1.5pt,
 style={post},
labelstyle={sloped}
]
\tikzset{EdgeStyle/.style={post}}

\Vertex[x=0, y=0, 
 L={${\bf p}_1$},style={shape=rectangle}
]{p1}
\Vertex[x=-2, y=1, 
 L={$1$}
]{p11}
\Vertex[x=-1, y=1, 
 L={$2$}
]{p12}

\Vertex[x=0, y=1, 
 L={$\dots$}
]{dd1}

\Vertex[x=1, y=1, 
 L={$m$}
]{p1m}
\Edge[label=$1$,color=black](p1)(p11)
\Edge[label=$2$,color=black](p1)(p12)
\Edge[label=$m$,color=black](p1)(p1m)

\Vertex[x=5, y=0, 
 L={${\bf p}_2$},style={shape=rectangle}
]{p2}
\Vertex[x=3, y=1, 
 L={$1$}
]{p21}
\Vertex[x=4, y=1, 
 L={$2$}
]{p22}

\Vertex[x=5, y=1, 
 L={$\dots$}
]{dd2}

\Vertex[x=6, y=1, 
 L={$n$}
]{p2n}
\Edge[color=black,label=$1$](p2)(p21)
\Edge[color=black,label=$2$](p2)(p22)
\Edge[color=black,label=$n$](p2)(p2n)
\end{tikzpicture}
\end{center}
The composition ${\bf p}_1\circ_i{\bf p}_2$ is represented by the tree:

\begin{center}
\begin{tikzpicture}
\tikzstyle{VertexStyle}=[
]

\SetUpEdge[lw = 1.5pt,
 style={post},
labelstyle={sloped}
]
\tikzset{EdgeStyle/.style={post}}

\Vertex[x=0, y=-1, 
 L={${\bf p}_1$},style={shape=rectangle}
]{p1}
\Vertex[x=-3.5, y=1, 
 L={$1$}
]{p11}
\Vertex[x=-2, y=1, 
 L={$2$}
]{p12}

\Vertex[x=-1, y=1, 
 L={$\dots$}
]{dd1}

\Vertex[x=0, y=1, 
 L={${\bf p}_2$}
]{p2}

\Vertex[x=1, y=1, 
 L={$\dots$}
]{dd2}

\Vertex[x=2, y=1, 
 L={$m+n-1$}
]{p1m}

\Vertex[x=-2, y=2, 
 L={$i$}
]{p21}

\Vertex[x=-1, y=2, 
 L={$i+1$}
]{p22}

\Vertex[x=0, y=2, 
 L={$\dots$}
]{dd3}

\Vertex[x=1, y=2, 
 L={$i+n-1$}
]{p2n}

\Edge[color=black,label=$1$](p2)(p21)
\Edge[color=black,label=$2$](p2)(p22)
\Edge[color=black,label=$n$](p2)(p2n)

\Edge[color=black,label=$1$](p1)(p11)
\Edge[color=black,label=$2$](p1)(p12)
\Edge[color=black,label=$m$](p1)(p1m)
\Edge[color=black,label=$i$](p1)(p2)

\end{tikzpicture}
\end{center}

Now the two associativity rules are easily understood in terms of trees:
\begin{enumerate}
\item Associativity 1:
The left hand side of (\ref{ass1}) gives
\begin{center}
\begin{tikzpicture}
\tikzstyle{VertexStyle}=[
]

\SetUpEdge[lw = 1.5pt,
 style={post},
labelstyle={sloped}
]
\tikzset{EdgeStyle/.style={post}}

\Vertex[x=0, y=-1, 
 L={${\bf p}_1$},style={shape=rectangle}
]{p1}
\Vertex[x=-7, y=1, 
 L={$1$}
]{p11}
\Vertex[x=-5, y=1, 
 L={$2$}
]{p12}

\Vertex[x=-4, y=1, 
 L={$\dots$}
]{dd6}

\Vertex[x=-3, y=1, 
 L={$j$}
]{p1j}

\Vertex[x=-2, y=1, 
 L={$\dots$}
]{dd7}

\Vertex[x=-1, y=1, 
 L={$\dots$}
]{dd1}

\Vertex[x=0, y=1, 
 L={${\bf p}_2$}
]{p2}

\Vertex[x=1, y=1, 
 L={$\dots$}
]{dd2}

\Vertex[x=2, y=1, 
 L={$m+n-1$}
]{p1m}

\Vertex[x=-2, y=2, 
 L={$i$}
]{p21}

\Vertex[x=-1, y=2, 
 L={$i+1$}
]{p22}

\Vertex[x=0, y=2, 
 L={$\dots$}
]{dd3}

\Vertex[x=1, y=2, 
 L={$i+m-1$}
]{p2n}

\Edge[color=black,label=$1$](p2)(p21)
\Edge[color=black,label=$2$](p2)(p22)
\Edge[color=black,label=$n$](p2)(p2n)

\Edge[color=black,label=$1$](p1)(p11)
\Edge[color=black,label=$2$](p1)(p12)
\Edge[color=black,label=$m$](p1)(p1m)
\Edge[color=black,label=$i$](p1)(p2)
\Edge[color=black,label=$j$](p1)(p1j)

\Vertex[x=-2, y=4, 
 L={${\bf p}_3$},style={shape=rectangle}
]{p3}
\Vertex[x=-4, y=5, 
 L={$1$}
]{p31}
\Vertex[x=-3, y=5, 
 L={$2$}
]{p32}

\Vertex[x=-2, y=5, 
 L={$\dots$}
]{dd3}

\Vertex[x=-1, y=5, 
 L={$q$}
]{p3q}

\Edge[color=black,label=$1$](p3)(p31)
\Edge[color=black,label=$2$](p3)(p32)
\Edge[color=black,label=$q$](p3)(p3q)

\Edge[color=black,style={dashed},label=$\circ_j$](p1j)(p3)

\end{tikzpicture}
\end{center}
whilst the right hand side reads

\begin{center}
\begin{tikzpicture}
\tikzstyle{VertexStyle}=[
]

\SetUpEdge[lw = 1.5pt,
 style={post},
labelstyle={sloped}
]
\tikzset{EdgeStyle/.style={post}}

\Vertex[x=0, y=-1, 
 L={${\bf p}_1$},style={shape=rectangle}
]{p1}
\Vertex[x=-7, y=1, 
 L={$1$}
]{p11}
\Vertex[x=-5, y=1, 
 L={$2$}
]{p12}

\Vertex[x=-4, y=1, 
 L={$\dots$}
]{dd6}

\Vertex[x=0, y=1, 
 L={$i+q-1$}
]{p1i}

\Vertex[x=-2, y=1, 
 L={$\dots$}
]{dd7}

\Vertex[x=-1, y=1, 
 L={$\dots$}
]{dd1}

\Vertex[x=-3, y=1, 
 L={${\bf p}_3$}
]{p3}

\Vertex[x=1, y=1, 
 L={$\dots$}
]{dd2}

\Vertex[x=2, y=1, 
 L={$m+q-1$}
]{p1m}

\Vertex[x=-5, y=2, 
 L={$j$}
]{p31}

\Vertex[x=-4, y=2, 
 L={$j+1$}
]{p32}

\Vertex[x=-3, y=2, 
 L={$\dots$}
]{dd3}

\Vertex[x=-2, y=2, 
 L={$j+q-1$}
]{p3q}

\Vertex[x=2, y=4, 
 L={${\bf p}_2$},style={shape=rectangle}
]{p2}
\Vertex[x=0, y=5, 
 L={$1$}
]{p21}
\Vertex[x=1, y=5, 
 L={$2$}
]{p22}

\Vertex[x=2, y=5, 
 L={$\dots$}
]{dd3}

\Vertex[x=3, y=5, 
 L={$n$}
]{p2n}

\Edge[color=black,label=$1$](p3)(p31)
\Edge[color=black,label=$2$](p3)(p32)
\Edge[color=black,label=$q$](p3)(p3q)

\Edge[color=black,label=$1$](p2)(p21)
\Edge[color=black,label=$2$](p2)(p22)
\Edge[color=black,label=$n$](p2)(p2n)

\Edge[color=black,label=$1$](p1)(p11)
\Edge[color=black,label=$2$](p1)(p12)
\Edge[color=black,label=$m$](p1)(p1m)
\Edge[color=black,label=$i$](p1)(p1i)
\Edge[color=black,label=$j$](p1)(p1j)

\Edge[color=black,style={dashed},label=$\circ_{i+q-1}$](p1i)(p2)

\end{tikzpicture}
\end{center}

\item Associativity 2: Drawing the left hand side of (\ref{ass2}), we find
\begin{center}
\begin{tikzpicture}
\tikzstyle{VertexStyle}=[
]

\SetUpEdge[lw = 1.5pt,
 style={post},
labelstyle={sloped}
]
\tikzset{EdgeStyle/.style={post}}

\Vertex[x=-1,y=-1,L={${\bf p}_1$}]{p1}
\Vertex[x=-3,y=1,L={$1$}]{p11}
\Vertex[x=-2,y=1,L={$2$}]{p12}
\Vertex[x=0,y=1,L={$j$}]{p1j}
\Vertex[x=2,y=1,L={$m$}]{p1m}
\Vertex[x=-1,y=1,L={$\dots$}]{dd1}
\Vertex[x=1,y=1,L={$\dots$}]{dd2}

\Vertex[x=-2,y=2.5,L={${\bf p}_2$}]{p2}
\Vertex[x=-6,y=4,L={$1$}]{p21}
\Vertex[x=-4,y=4,L={$2$}]{p22}
\Vertex[x=-2,y=4,L={${\bf p}_3$}]{p3}
\Vertex[x=0,y=4,L={$n+q-1$}]{p2n}
\Vertex[x=-3,y=4,L={$\dots$}]{dd3}
\Vertex[x=-1,y=4,L={$\dots$}]{dd4}

\Vertex[x=-4,y=5,L={$i$}]{p31}
\Vertex[x=-3,y=5,L={$i+1$}]{p32}
\Vertex[x=-2,y=5,L={$\dots$}]{dd5}
\Vertex[x=-1,y=5,L={$i+q-1$}]{p3q}

\Edge[label={$1$}](p1)(p11)
\Edge[label={$2$}](p1)(p12)
\Edge[label={$j$}](p1)(p1j)
\Edge[label={$m$}](p1)(p1m)

\Edge[label={$1$}](p2)(p21)
\Edge[label={$2$}](p2)(p22)
\Edge[label={$i$}](p2)(p3)
\Edge[label={$n$}](p2)(p2n)

\Edge[label={$1$}](p3)(p31)
\Edge[label={$2$}](p3)(p32)
\Edge[label={$q$}](p3)(p3q)

\Edge[color=black,style={dashed},label=$\circ_{j}$](p1j)(p2)
\end{tikzpicture}
\end{center}
and the right hand side of (\ref{ass2}) gives

\begin{center}
\begin{tikzpicture}
\tikzstyle{VertexStyle}=[
]

\SetUpEdge[lw = 1.5pt,
 style={post},
labelstyle={sloped}
]
\tikzset{EdgeStyle/.style={post}}

\Vertex[x=0,y=-1,L={${\bf p}_1$}]{p1}
\Vertex[x=-4,y=1,L={$1$}]{p11}
\Vertex[x=-2,y=1,L={$2$}]{p12}
\Vertex[x=0,y=1,L={$p2$}]{p2}
\Vertex[x=2,y=1,L={$m+n-1$}]{p1m}
\Vertex[x=-1,y=1,L={$\dots$}]{dd1}
\Vertex[x=1,y=1,L={$\dots$}]{dd2}

\Vertex[x=-4,y=2.5,L={$j$}]{p21}
\Vertex[x=-2,y=2.5,L={$j+1$}]{p22}
\Vertex[x=0,y=2.5,L={$j+i-1$}]{p2i}
\Vertex[x=2,y=2.5,L={$j+n-1$}]{p2n}
\Vertex[x=-1,y=2.5,L={$\dots$}]{dd3}
\Vertex[x=1,y=2.5,L={$\dots$}]{dd4}

\Vertex[x=-2,y=4,L={$p3$}]{p3}
\Vertex[x=-4,y=5,L={$1$}]{p31}
\Vertex[x=-3,y=5,L={$2$}]{p32}
\Vertex[x=-2,y=5,L={$\dots$}]{dd5}
\Vertex[x=-1,y=5,L={$q$}]{p3q}

\Edge[label={$1$}](p1)(p11)
\Edge[label={$2$}](p1)(p12)
\Edge[label={$j$}](p1)(p2)
\Edge[label={$m$}](p1)(p1m)

\Edge[label={$1$}](p2)(p21)
\Edge[label={$2$}](p2)(p22)
\Edge[label={$i$}](p2)(p2i)
\Edge[label={$n$}](p2)(p2n)

\Edge[label={$1$}](p3)(p31)
\Edge[label={$2$}](p3)(p32)
\Edge[label={$q$}](p3)(p3q)

\Edge[color=black,style={dashed},label=$\circ_{j}$](p2i)(p3)

\end{tikzpicture}
\end{center}
\end{enumerate}

In short, the compositions satisfy the same branching rules than the trees together with a correct relabeling of the leaves.\\ \\
Let us give one of the simplest example of operads. Consider a set $\bf S$ and the set of the functions ${\rm Map}_{\bf S}(n)$ from $S^n$ to $S$. The set ${\rm Map}_{\bf S}:=\bigcup_{n\in\mathbb N}{\rm Map}_{\bf S}(n)$  endowed with the classical composition defined by
\[
\left(f\circ \left(g_1,\dots, g_n\right)\right)\left(s_{11},\dots,s_{1k_1},\dots, s_{n,1},\dots,s_{n,k_n}\right):=f\left(g_1\left(s_{11},\dots,s_{1k_1}\right),\dots,g\left(s_{n1},\dots,s_{nk_n}\right)\right)
\]
for each $f\in {\rm Map}_{\bf S}(n)$, $g_i\in{\rm Map}_{\bf S}(k_i)$, $s_{ij}\in S$, is an operad.
\subsection{Free operad, morphisms, suboperads, quotients \emph{etc.}}
The definition of the morphism induces the existence of operads with universal properties called {\it free operads}. Let $G=(G_k)_k$ be a collection of sets, the set ${\rm Free}_G(n)$ is the set of planar rooted trees with $n$ leaves with labeled nodes where nodes with $k$ branches are labeled by the elements of $G_k$. The free operad on $G$ is obtained by endowing the set ${\rm Free}_G=\bigcup_{n\in \mathbb N }{\rm Free}_G(n)$ with the composition ${\bf p}_1\circ_i {\bf p}_2$ which consists in grafting the $i$th leaf of ${\bf p}_1$ with the root of ${\bf p}_2$. Note that ${\rm Free}_G$ contains a copy of $G$ which is the set of the trees with only one inner node (the root) labeled by elements of $G$; for simplicity we will identify it with $G$. The universality means that for any map $\varphi:G\rightarrow {\goth P}$ it exists a unique morphism of operad $\phi: {\rm Free}_G\rightarrow {\goth P}$ such that $\phi(g)=\varphi(g)$ for each $g\in G$.\\
Consider the equivalence relation $\equiv_\phi$ on ${\rm Free}_G$ defined by ${\bf p}_1\equiv_\phi{\bf p}_2$ if and only if $\phi({\bf p}_1) =\phi({\bf p}_2)$. Obviously this relation is compatible with the composition in the sense that ${\bf p}_1\equiv_\phi {\bf p}'_1$ and ${\bf p}_2\equiv_\phi {\bf p}'_2$ implies ${\bf p}_1\circ_i{\bf p}_2\equiv_\phi {\bf p}'_1\circ_i{\bf p}'_2$. Hence, the operad $\goth P$ can be defined as a {\it quotient} (up to an isomorphism) of the free operad: ${\goth P}={\rm Free}_G/_{\equiv_\phi}$. More generally, when an equivalence relation $\equiv$ on the elements of an operad $\goth P$ is compatible with the composition, this induces a structure of operad on the set ${\goth P}/_{\equiv} $; the resulting operad is called the {\it quotient} of $\goth P$ by $\equiv$.\\

The dual notion of quotient is the notion of suboperad whose definition is very classical: a suboperad of an operad ${\goth P}$ is a subset of ${\goth P}$ which contains $\bf 1$ and is stable by composition.\\ 

Consider a set $\bf S$ together with an action of an operad $\goth P$. That is: for each ${\bf p}\in{\goth P}(n)$ we define a map ${\bf p}: {\bf S}^n\rightarrow {\bf S}$. We will say that $\bf S$ is a $\goth P$-module if the action of $\goth P$ is compatible with the composition in the following sense: for each ${\bf p}_1\in {\goth P}(m)$, ${\bf p}_2\in {\goth P}(n)$, $1\leq i\leq m$, $s_1,\dots, s_{m+n-1}\in {\bf S}$ one has:
\[
 {\bf p}_1(s_1,\dots,s_{i-1},{\bf p_2}(s_i,\dots,s_{i+n-1}),s_{i+n},\dots, s_{m+n-1})=({\bf p}_1\circ_i{\bf p}_2)(s_1,\dots,s_{m+n-1}),
\]
\emph{i.e.} there is a morphism of operads from ${\rm Map}_{\bf S}$ to ${\goth P}$.

\section{Multitildes operad\label{multitilde}}

  Multitildes operators have been introduced by Caron~\emph{et al.}~\cite{CCM11e} in order to increase the number of regular languages represented by an expression of a fixed width. The only operation involved in the computation of the languages they denote is the addition of the empty word in several catenation factors in a catenation product. This operation is not very interesting as long as it is considered as unary, since for every regular expression $E$, the addition of the empty word in $L(E)$ is denoted by the expression $E+\varepsilon$, the width of which is the same as $E$. Multitildes  extend it to $k$-ary operators allowing to define new regular expressions, the EMTREs (Extended to Multitilde Regular Expression) \cite{CCM11e}. 
 One of the main interest of multitilde operators is that, for any simple regular expression  with no star, there exists an equivalent EMTRE with only one $k$-ary operator which is a multitilde. We will see that this equivalence involves a natural composition which defines a structure of operad. 

In this section, we first recall the main definitions and results about multitildes then we describe in more details the operad structure acting on languages. 

\subsection{Extended to multitilde regular expression}

   Let $\Sigma$ be an  alphabet and  $\varepsilon$ be the empty word. A \emph{regular expression} $E$ over $\Sigma$ is inductively defined by $E=\emptyset$, $E=\varepsilon$, $E=a$, $E=(F\cdot G)$, $E=(F+G)$, $E=(F^*)$ where $a\in\Sigma$ and $F$, $G$ are regular expressions. The \emph{language denoted by} the expression $E$ is inductively defined by $L(\emptyset)=\emptyset$, $L(\varepsilon)=\{\varepsilon\}$, $L(a)=\{a\}$, $L(F+G)=L(F)\cup L(G)$, $L(F\cdot G)=L(F)\cdot L(G)$, $L(F^*)=L(F)^*$ where $a\in\Sigma$ and $F$, $G$ are regular expressions. 

  Let $i,j,n$ be three positive integers. Let $(E_1,\ldots,E_n)$ be a list of $n$ expressions. The catenation $E_i\cdot E_{i+1}\cdots E_j$ is denoted by $E_{i\cdots j}$ and for convenience the language $L(E_{i\cdots j})$ equals $\{\varepsilon\}$ when $i>j$. A list $(E_i,E_{i+1},\ldots,E_j)$ is denoted by $E_{1,n}$. Given a set $S$, we denote by $\Card{S}$ the number of elements of $S$. Suppose that $i\leq j$. We set $\Intervalle{i}{j}=\{i,i+1,\ldots,j-1,j\}$, $\Intervalle{i}{f}^2_{\leq}=$ $\{(k,k')\mid k,k'\in\Intervalle{i}{f}\text{ and }k\leq k'\}$ and $\mathcal{S}_n={\cal P}(\Intervalle{1}{n}^2_\leq)$ the set of the subsets of $\Intervalle{1}{n}^2_\leq$. The set of indices of $S\in{\cal S}_n$ is $I_S=\Intervalle{1}{\Card{S}}$ and it holds $S=\{(i_k,f_k)\}_{k\in I_S}$, with for all $k\in I_S$, $(i_k,f_k)\in\Intervalle{1}{n}^2_\leq$.

A multitilde of arity $n$ is a formal symbol  $\Ppn T$ where $T\subset {\cal S}_n$. Remark for convenience, we use the same symbol to denote several operations with different arities. For instance, the multitilde $\Ppn{\{(1,3)\}}$ should be $n$-ary for any $n\geq 3$. 
 These symbols together with the regular operations define   a new family of expressions. 
  
  \begin{definition}[\cite{CCM11e}]\label{def ERABT}
    An \emph{Extended to Multitilde Regular Expression} (\textbf{EMTRE}) over an alphabet $\Sigma$ is inductively defined by:

    \centerline{
      \begin{tabular}{r@{ }l@{\ \ }r@{ }l@{\ \ }r@{ }l}
	$E$ & $=\emptyset$,      &  &  & $E$ & $=a$, \\
	$E$ & $=(F+G)$,    &  $E$ & $=(F\cdot G)$, &  $E$ & $=(F^{*})$,\\	              
      \end{tabular}
    }
    
    \centerline{$E=(\Pp{T}{E_{1,n}})$,}
    
    
    \centerline{
	with $a \in\Sigma$, $F$ and $G$  are two EMTREs,
	     $E_{1,n}$ a list of EMTREs and $T$ a list in $\mathcal{S}_n$.
    }
  \end{definition}

Now let us recall how to extend the notion of languages denoted by  regular expressions to EMTREs, by defining the action of multitilde operators.\\ 

A set $S\in{\cal S}_n$ is said \emph{free} if and only if for all $(i,f),(i',f')$ in $S$ such that $(i,f)\neq (i',f')$, the condition $\Intervalle{i}{f}\cap \Intervalle{i'}{f'}=\emptyset$ holds. Let $S\in {\cal S}_n$, we denote by ${\cal F}(S)$ the set of all free subsets of $S$.

Let $L_1,\dots,L_n$ be a list of languages over an alphabet $\Sigma$ and let $T$ be a list in $\mathcal{S}_n$. We set $\mathcal{W}_T(L_1,\dots,L_n)=L'_1\cdots L'_n$ where 
\[
 L'_k=\left\{\begin{array}{ll}
\{\varepsilon\}&\mbox{ if } k\in\bigcup_{(i,f)\in T} \Intervalle{i}{f}\\
L_k&\mbox{ otherwise.}
\end{array}\right.
\]

  \begin{definition}
    Let $L_{1,n}$ be a list of languages over an alphabet $\Sigma$. Let $T$ be a list in $\mathcal{S}_n$. The language   $\Pp{T}{L_{1,n}}$ is defined by: 

 $$\Pp{T}{L_{1,n}}:=\bigcup_{S\in {\cal F}(T)}\mathcal{W}_S(L_{1,n}).$$ \end{definition}
  
Let $E_{1,n}$ be a list of EMTREs, the language denoted by  $\Pp{T}{E_{1,n}}$ where $T\in \mathcal{S}_n$ is inductively defined by 
$$L(\Pp{T}{E_{1,n}}):=\Pp{T}{L(E_1),\ldots, L(E_n)}.$$ 
As usual two EMTREs $E$ and $F$ are said {\it equivalent} is they denote the same language; we write $E\equiv F$.\\

\noindent Note all these definitions are slight rewordings to those given in \cite{CCM11e}.

\noindent  Every regular expression  $E$ that does not use any star operator can be turned into an equivalent expression $F=\Pp{T}{e_1,\ldots,e_n}$ where $\Pp{T}\in\mathcal{S}_n$ and $\forall k\in\Intervalle{1}{n}$, $e_k\in\Sigma\cup\{\emptyset\}$. Conversion formulas are given Table~\ref{tab regl modif mtb}.
  
  \begin{table}[H]
    \centerline{
      \begin{tabular}{c@{\ }c@{\ }c}
	    $\emptyset$ & $\equiv$ & $\Pp{\emptyset}{\emptyset}$\\
	    $\varepsilon$ & $\equiv$ & $\Pp{(1,1)}{\emptyset}$\\
	    $a$ & $\equiv$ & $\Pp{\emptyset}{a}$\\
      \end{tabular}
      \hspace{5em}
      \begin{tabular}{c@{\ }c@{\ }c}
	    $E'_1+E'_2$ & $\equiv$ & $\Pp{(1,2),(2,3)}{E'_1,\emptyset,E'_2}$\\
	    $E'_1\cdot E'_2$ & $\equiv$ & $\Pp{\emptyset}{E'_1,E'_2}$\\
      \end{tabular}
    }
    \caption{Multitildes conversion.}
    \label{tab regl modif mtb}
  \end{table}

  One of the authors has shown in~\cite{Mig10} how to compose multitildes operators preserving languages:
  
  \centerline{$L\left(\Pp{T}{E_1,\ldots,E_{k-1},\Pp{T'}{E'_1,\ldots,E'_{k'}},E_{k+1},\ldots,E_n}\right)=L\left(\Pp{T\circ_k T'}{E_1,\ldots,E_{k-1},E'_1,\ldots,E'_{k'},E_{k+1},\ldots,E_n}\right)$.}
We do not recall the original definition here but we   describe this operation using new operators, namely $\Decn$ and $\Shiftn$ in the sequel of this section.
We   show that this composition endows the set of multitildes with a structure of operad.

\subsection{The operators $\Decn$ and $\Shiftn$}  

  The first operators to be defined are the $\Decn$ operators, parametrized by any integer $k$. Any operator $\Decnk{k}$ will increase both elements of a couple $(x,y)$ of integers by $k$:
  
  \centerline{$\Dec{k}{x,y}:=(x+k,y+k).$}

  These operators are commutative ones:    
  \begin{equation}\label{CommutationDec}
    \Dec{k}{\Dec{\ell}{x,y}} :=\Dec{\ell}{\Dec{k}{x,y}}
  \end{equation}
  
  Indeed, both of the expressions result in the same couple:
  
  \begin{equation}\label{CompositionDec}
    \Dec{k}{\Dec{\ell}{x,y}}=(x+k+\ell ,y+k+\ell)=\Dec{\ell}{\Dec{k}{x,y}}
  \end{equation}

  The second family of operators is the set of $\Shiftnk{k,n}$ operators for any integers $n,k$, which shift elements in couples by inserting $n$ elements in position $k$ (see Example~\ref{ex explic shift} for details):

  \centerline{
    $\Shift{k,n}{x,y}:=\left\{
      \begin{array}{ll}
	    (x,y) & \text{ if } y<k, \\
	    (x,y+n-1)& \text{ if } x\leq k\leq y, \\
	    \Dec{n-1}{x,y}& \text{ otherwise.} \\
      \end{array}
    \right.$
  }
  
  \begin{example}\label{ex explic shift}
    Let $n$ be a positive integer, and $x\leq y$ be two integers in $\Intervalle{1}{n}$. The operators $\Shiftnk{k,n}$ transforms the couple $(x,y)$ performing an insertion of $n$ elements at position $k$ presented in Figure~\ref{fig explic shift}. After an insertion of $6$ elements in position $5$ (using the operator $\Shiftnk{5,6}$), couples $(1,3)$, $(3,7)$ and $(7,8)$ are respectively transformed into $(1,3)$, $(3,12)$ and $(12,13)$.
  \end{example}
  
  \begin{figure}[H]
    \begin{minipage}{0.4\linewidth}
      \begin{tikzpicture}[scale=0.5]
	    \draw (1,0) -- (10,0);
	    \draw[thick] (1,0.3) -- (4,0.3);
	    \draw[thick] (7,0.3) -- (9,0.3);
	    \draw[thick] (3,0.6) -- (8,0.6);
	    \foreach \x in {1,...,10}
	      \draw (\x,-2pt) -- (\x,2pt);
	    \foreach \x in {1,...,4}
	      \draw (\x ,2pt)++(0.5,0) node[anchor=north] {$\x$};
	    \draw (5 ,2pt)++(0.5,0) node[anchor=north] {\underline{5}};
	    \foreach \x in {6,...,9}
	      \draw (\x ,2pt)++(0.5,0) node[anchor=north] {$\x$};
	  \end{tikzpicture}
    \end{minipage}
    \hfill
    \begin{minipage}{0.55\linewidth}
      \begin{tikzpicture}[scale=0.5]
	    \draw (1,0) -- (15,0);
	    \draw[thick] (1,0.3) -- (4,0.3);
	    \draw[thick] (12,0.3) -- (14,0.3);
	    \draw[thick] (3,0.6) -- (13,0.6);
	    \foreach \x in {1,...,15}
	      \draw (\x,-2pt) -- (\x,2pt);
	    \foreach \x in {1,...,4}
	      \draw (\x ,2pt)++(0.5,0) node[anchor=north] {$\x$};
	    \foreach \x in {5,...,10}
	      \draw (\x ,2pt)++(0.5,0) node[anchor=north] {\underline{$\x$}};
	    \foreach \x in {11,...,14}
	      \draw (\x ,2pt)++(0.5,0) node[anchor=north] {$\x$};
      \end{tikzpicture}
    \end{minipage}
    \caption{The $\Shiftnk{5,6}$ operator.}
    \label{fig explic shift}
  \end{figure}
  
  For convenience, definition of $\Shiftnk{k,n}$ operators can be rewritten using the characteristic function $\mathbbm{1}_E(x)$ defined for any integer set $E$ and any integer $x$ by:
  
  \centerline{
      $\mathbbm{1}_E(x)=
        \left\{
          \begin{array}{l@{\ }l}
            1 & \text{ if }x\in E,\\
            0 & \text{ otherwise.}\\
          \end{array}
        \right.$
  }
  
  Since we will only use interval, we will denote by $\mathbbm{1}_k(x)$ for any integer $k$ the function $\mathbbm{1}_{\Intervalle{k}{\infty}}(x)$.
  
  \begin{lemma}
    Let $k,n,x,y$ be four integers. Then:
    
  \centerline{
    $\Shift{k,n}{x,y}=(x+\mathbbm{1}_{k+1}(x)\times(n-1),y+\mathbbm{1}_{k}(y)\times(n-1))
    $.
  }
  \end{lemma}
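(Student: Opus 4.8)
The plan is to prove the equality of the two couples coordinatewise, following the three-line definition of $\Shiftnk{k,n}$. Since the right-hand side is a pair, it suffices to show separately that the first coordinate of $\Shift{k,n}{x,y}$ equals $x+\mathbbm{1}_{k+1}(x)\times(n-1)$ and that its second coordinate equals $y+\mathbbm{1}_{k}(y)\times(n-1)$. Throughout I would work under the standing hypothesis $x\leq y$ inherited from the couples of $\mathcal{S}_n$ (as fixed in Example~\ref{ex explic shift}); this is what makes the three branches of the definition exhaustive and mutually exclusive, and it is in fact essential, since the formula fails for $x>y$.

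First I would unfold the definition and rewrite its three branching conditions purely in terms of the positions of $x$ and $y$ relative to the thresholds $k$ and $k+1$. The key observation is that, under $x\leq y$, the ``otherwise'' of the third line is equivalent to $k<x$, i.e. $x\geq k+1$: indeed $\neg(y<k)$ forces $y\geq k$, whereupon $\neg(x\leq k\leq y)$ reduces to $x>k$. I would also recall from the definition of the $\Decn$ operators that $\Dec{n-1}{x,y}=(x+n-1,y+n-1)$, so that the third branch adds $n-1$ to both coordinates.

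Next I would settle the second coordinate. Across the three branches, $y$ is left unchanged exactly in the first branch ($y<k$) and is increased by $n-1$ in the second and third branches; but both of those branches satisfy $y\geq k$ (in the third, $y\geq x>k$). Hence the $y$-coordinate equals $y+(n-1)$ precisely when $y\geq k$, that is when $\mathbbm{1}_{k}(y)=1$, which is exactly $y+\mathbbm{1}_{k}(y)\times(n-1)$. Symmetrically, for the first coordinate, $x$ is increased by $n-1$ only in the third branch, characterised by $x>k$, i.e. $x\geq k+1$, i.e. $\mathbbm{1}_{k+1}(x)=1$; in the first two branches $x\leq k$ leaves $x$ untouched, giving $x+\mathbbm{1}_{k+1}(x)\times(n-1)$ and completing the identification.

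I do not expect a genuine obstacle: the statement merely reformulates a piecewise definition through indicator functions. The only point requiring care is the asymmetry between the two thresholds ($k+1$ for $x$ versus $k$ for $y$), which is dictated by the boundary branch $x\leq k\leq y$, where $y$ must already be shifted while $x$ must not. Keeping track of the strict-versus-nonstrict inequalities at the transition between the second and third branches is therefore the one bookkeeping step to carry out carefully.
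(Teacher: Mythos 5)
Your proof is correct: the paper states this lemma without any proof (treating it as an immediate reformulation of the piecewise definition), and your coordinatewise case analysis over the three branches is exactly the verification one would write out. Your observation that the standing hypothesis $x\leq y$ is genuinely needed (the formula fails, e.g., for $x>y$ with $y<k<x$) is a worthwhile precision that the paper's statement, which says only ``four integers,'' silently omits.
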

  
  Next lemma presents the commutation of $\Decn$ and $\Shiftn$.

  \begin{lemma}\label{CompositionDecShift}
    Let $x,y,i,m,n$ be five integers with $x\leq y$. Then:
    
    \centerline{$\Dec{n}{\Shift{i,m}{x,y}}=\Shift{i+n,m}{\Dec{n}{x,y}}.$}
  \end{lemma}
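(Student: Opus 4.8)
The plan is to avoid the three-way case split in the definition of $\Shiftn$ and instead work entirely with the characteristic-function reformulation established in the previous lemma, namely $\Shift{k,m}{x,y}=(x+\mathbbm{1}_{k+1}(x)\times(m-1),\,y+\mathbbm{1}_{k}(y)\times(m-1))$. Since $\Decn$ merely adds a constant to each coordinate, both sides of the claimed identity become explicit affine expressions in $x$ and $y$ whose only nontrivial ingredients are the two indicator functions; the whole statement then collapses to a translation-invariance property of $\mathbbm{1}$, and no delicate reasoning about the geometry of the insertion is needed.

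Concretely, I would first expand the left-hand side. Applying the previous lemma gives $\Shift{i,m}{x,y}=(x+\mathbbm{1}_{i+1}(x)(m-1),\,y+\mathbbm{1}_{i}(y)(m-1))$, and then $\Decn$ adds $n$ to each entry, producing $(x+n+\mathbbm{1}_{i+1}(x)(m-1),\,y+n+\mathbbm{1}_{i}(y)(m-1))$. For the right-hand side I would first apply $\Decn$ to get $\Dec{n}{x,y}=(x+n,y+n)$ and then the same lemma with shift threshold $i+n$, yielding $(x+n+\mathbbm{1}_{i+n+1}(x+n)(m-1),\,y+n+\mathbbm{1}_{i+n}(y+n)(m-1))$. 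Comparing coordinate by coordinate, the two sides coincide as soon as $\mathbbm{1}_{i+n+1}(x+n)=\mathbbm{1}_{i+1}(x)$ and $\mathbbm{1}_{i+n}(y+n)=\mathbbm{1}_{i}(y)$.

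These two identities are exactly instances of $\mathbbm{1}_{a}(z)=\mathbbm{1}_{a+n}(z+n)$, which holds immediately since $\mathbbm{1}_a(z)=1$ iff $z\ge a$ iff $z+n\ge a+n$; taking $a=i+1,\,z=x$ and $a=i,\,z=y$ closes the argument. The one bookkeeping hazard, and hence the step I expect to require the most care, is the asymmetry between the two coordinates: the first is governed by the threshold $k+1$ and the second by $k$, so when $k$ is replaced by $i+n$ one must keep $i+n+1$ matched against $i+1$ and $i+n$ against $i$ and not confuse them. I would also remark that the hypothesis $x\le y$ plays no role in this computation; it is presumably kept only so that $(x,y)$ is a legitimate element of $\Intervalle{1}{n}^2_{\leq}$. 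As a cross-check (and as a fallback if one preferred to argue straight from the three-case definition), one could verify the equality case by case according to the position of $i$ relative to $[x,y]$, using that raising both $i$ and the couple by $n$ preserves each of the conditions $y<i$, $x\le i\le y$, and $i<x$; but the indicator computation handles all three cases at once and is the cleaner route.
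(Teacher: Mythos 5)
Your proposal is correct and is essentially identical to the paper's own proof: both expand $\Shiftn$ via the characteristic-function lemma and reduce the identity to the translation invariance $\mathbbm{1}_{i+1}(x)=\mathbbm{1}_{i+n+1}(x+n)$ and $\mathbbm{1}_{i}(y)=\mathbbm{1}_{i+n}(y+n)$, which is precisely the third equality in the paper's chain of rewritings. The only cosmetic difference is that you expand both sides and meet in the middle, whereas the paper rewrites the left-hand side into the right-hand side in one sequence.
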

  \begin{proof}
    From formulas for $\Decn$ and $\Shiftn$ operators:
  
    \centerline{
      \begin{tabular}{l@{\ }l}
        $\Dec{n}{\Shift{i,m}{x,y}}$ & $=\Dec{n}{x+\mathbbm{1}_{i+1}(x)\times(m-1),y+\mathbbm{1}_{i}(y)\times(m-1)}$\\
        & $=(x+n+\mathbbm{1}_{i+1}(x)\times(m-1),y+n+\mathbbm{1}_{i}(y)\times(m-1))$\\
        & $=(x+n+\mathbbm{1}_{i+n+1}(x+n)\times(m-1),y+n+\mathbbm{1}_{i+n}(y+n)\times(m-1))$\\
        & $=\Shift{i+n,m}{x+n,y+n}$\\
        & $=\Shift{i+n,m}{\Dec{n}{x,y}}$\\
      \end{tabular}
    }
    
  \end{proof}
  
  The $\Shiftn$ operators commute in a particular way, described as follows:

  \begin{lemma}\label{CommutationShift}
    Let $x,y,m,n,k,\ell$ be six integers with $x\leq y$ and $k<\ell$. Then:

    \centerline{$\Shift{k,m}{\Shift{\ell,n}{x,y}}=\Shift{\ell+m-1,n}{\Shift{k,m}{x,y}}.$}
  \end{lemma}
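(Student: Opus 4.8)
The plan is to reduce this two–dimensional identity to a one–dimensional computation by exploiting the characteristic-function form of $\Shiftn$ established in the first lemma of this subsection, which rewrites the shift so that it acts separately on each coordinate. Writing $\mathbbm{1}_c(t)$ for $\mathbbm{1}_{\Intervalle{c}{\infty}}(t)$ and introducing the scalar map
\[
g_{c,p}(t):=t+\mathbbm{1}_c(t)\times(p-1),
\]
that lemma reads $\Shift{k,n}{x,y}=(g_{k+1,n}(x),g_{k,n}(y))$, and — since it was stated for arbitrary integer arguments, with no hypothesis $x\leq y$ — it applies equally to the intermediate pairs produced after one shift. Consequently both sides of the claim split coordinatewise, and it suffices to verify the equality on the $x$-coordinate and on the $y$-coordinate independently.

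First I would carry out this splitting explicitly. On the $y$-coordinate the left side is $g_{k,m}(g_{\ell,n}(y))$ and the right side is $g_{\ell+m-1,n}(g_{k,m}(y))$; on the $x$-coordinate the left side is $g_{k+1,m}(g_{\ell+1,n}(x))$ and the right side is $g_{(\ell+m-1)+1,n}(g_{k+1,m}(x))=g_{\ell+m,n}(g_{k+1,m}(x))$. Both are instances of the single scalar identity
\[
g_{a,m}(g_{b,n}(t))=g_{b+m-1,n}(g_{a,m}(t)),
\]
valid for all integers $t$ whenever $a<b$: the $y$-coordinate is the case $(a,b)=(k,\ell)$ and the $x$-coordinate the case $(a,b)=(k+1,\ell+1)$. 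In both cases the hypothesis $k<\ell$ supplies exactly the required inequality $a<b$, so a single scalar lemma settles the whole statement.

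The core of the argument is therefore this scalar identity, which I would prove by a case analysis on the position of $t$ relative to $a$ and $b$ (using $a<b$, i.e. $a\leq b-1$): in the regime $t<a$ both compositions fix $t$; in the regime $a\leq t<b$ both add only the increment $m-1$ coming from the $a$-indexed operator; and in the regime $t\geq b$ both add $m-1$ and $n-1$. The only delicate point is the bookkeeping of thresholds after the first shift has acted: on the right-hand side the outer operator is indexed by $b+m-1$ rather than $b$, and one must check that applying $g_{a,m}$ first (which lifts every point above $a$ by $m-1$) is precisely compensated by raising the threshold of the $n$-shift from $b$ to $b+m-1$, so that the two orders of composition coincide. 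This compensation is the structural reason the outer index on the right of the lemma is $\ell+m-1$; confirming it is the main (though routine) obstacle, while the remaining threshold comparisons in each case follow immediately from $a<b$ together with $m,n\geq1$.
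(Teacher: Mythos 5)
Your proposal is correct and takes essentially the same route as the paper: both proofs rest on the characteristic-function reformulation $\Shift{k,n}{x,y}=(x+\mathbbm{1}_{k+1}(x)\times(n-1),\,y+\mathbbm{1}_{k}(y)\times(n-1))$ and then verify the identity coordinate by coordinate under the hypothesis $k<\ell$. The only difference is presentational: you factor both coordinates through a single scalar identity $g_{a,m}(g_{b,n}(t))=g_{b+m-1,n}(g_{a,m}(t))$ for $a<b$, proved by a three-case analysis on $t$, whereas the paper reaches the same conclusion by directly checking the equivalent indicator identities $\mathbbm{1}_{\ell+1}(x)=\mathbbm{1}_{\ell+m}(x+\mathbbm{1}_{k+1}(x)\times(m-1))$ and $\mathbbm{1}_{k+1}(x)=\mathbbm{1}_{k+1}(x+\mathbbm{1}_{\ell+1}(x)\times(n-1))$.
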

  \begin{proof}
    From formulas for $\Shiftn$ operators:
    
    \centerline{
      \begin{tabular}{l@{\ }l@{\ }l}
        $\Shift{k,m}{\Shift{\ell,n}{x,y}}$ & $=$ & $\Shift{k,m}{x+\mathbbm{1}_{\ell+1}(x)\times(n-1), y+\mathbbm{1}_{\ell}(y)\times(n-1)}$\\    
        & $=$ & $(x',y')$
      \end{tabular}
    }
    
    \centerline{
      \begin{tabular}{l@{\ }l}
        with & $x'=x+\mathbbm{1}_{\ell+1}(x)\times(n-1)+\mathbbm{1}_{k+1}(x+\mathbbm{1}_{\ell+1}(x)\times(n-1))\times(m-1)$\\
        and & $y'=y+\mathbbm{1}_{\ell}(y)\times(n-1)+\mathbbm{1}_{k}(y+\mathbbm{1}_{\ell}(y)\times(n-1))\times(m-1)$.\\
      \end{tabular}
    }
    
    Since $k<\ell$, then
    
    \centerline{
      \begin{tabular}{l@{\ }l}
        & $\mathbbm{1}_{\ell+1}(x)$
     $=$
    $\mathbbm{1}_{\ell+m}(x+m-1)$
    $=$
    $\mathbbm{1}_{\ell+m}(x+\mathbbm{1}_{k+1}(x)\times(m-1))$\\
    and
    & $\mathbbm{1}_{k+1}(x)$
    $=$
    $\mathbbm{1}_{k+1}(x+\mathbbm{1}_{\ell+1}(x)\times(n-1))$.\\
      \end{tabular}
    }
    
    Consequently,
    
    \centerline{
      \begin{tabular}{l@{\ }l}
      & $x'=
    x+\mathbbm{1}_{k+1}(x)\times(m-1)+\mathbbm{1}_{\ell+m}(x+\mathbbm{1}_{k+1}(x)\times(m-1))\times(n-1),$\\
    and & $y'=y+\mathbbm{1}_{k}(x)\times(m-1)+\mathbbm{1}_{\ell+m-1}(y+\mathbbm{1}_{k}(y)\times(m-1))\times(n-1)
    $.\\
      \end{tabular}
    }
    
    Finally,
    
    \centerline{
      \begin{tabular}{l@{\ }l}
        $(x',y')$ & $=\Shift{\ell+m-1,n}{(x+\mathbbm{1}_{k+1}(x)\times(m-1),y+\mathbbm{1}_{k}(y)\times(m-1))}$  \\
        & $=\Shift{\ell+m-1,n}{\Shift{k,m}{x,y}}.$
      \end{tabular}
    }
    
  \end{proof}

  Finally, another combination of $\Shiftn$ operators:

  \begin{lemma}\label{CompositionShift}
    Let $i,n,k,m,x,y$ be six integers such that $x\leq y$ and  $0\leq i< n$. Then:

    \centerline{$\Shift{k+i,m}{\Shift{k,n}{x,y}}=\Shift{k,m+n-1}{x,y}$.}
  \end{lemma}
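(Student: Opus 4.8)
The plan is to push everything through the coordinate formula of the preceding lemma, namely $\Shift{k,n}{x,y}=(x+\mathbbm{1}_{k+1}(x)\times(n-1),y+\mathbbm{1}_{k}(y)\times(n-1))$, so that the claimed identity between pairs reduces to a pair of scalar identities: one for the first component (governed by the threshold $k+1$) and one for the second (governed by the threshold $k$). These two verifications are entirely parallel, so I would carry out the first-component case in full and note that the second follows \emph{mutatis mutandis}, replacing $k+1$ by $k$ and $\mathbbm{1}_{k+i+1}$ by $\mathbbm{1}_{k+i}$ throughout.

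For the first component I would apply $\Shiftnk{k,n}$ to obtain $x'=x+\mathbbm{1}_{k+1}(x)(n-1)$, then apply $\Shiftnk{k+i,m}$ to reach $x'+\mathbbm{1}_{k+i+1}(x')(m-1)$, and compare this with the right-hand side $x+\mathbbm{1}_{k+1}(x)(m+n-2)$, where $m+n-2$ arises as $(m+n-1)-1$ in the formula for $\Shift{k,m+n-1}{x,y}$. The whole matter then comes down to the scalar identity $\mathbbm{1}_{k+1}(x)(n-1)+\mathbbm{1}_{k+i+1}(x+\mathbbm{1}_{k+1}(x)(n-1))(m-1)=\mathbbm{1}_{k+1}(x)(m+n-2)$, which I would settle by splitting on the value of $\mathbbm{1}_{k+1}(x)$.

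The decisive observation, and the only place where the hypothesis $0\leq i<n$ enters, occurs in the case $x\geq k+1$: here $\mathbbm{1}_{k+1}(x)=1$, so the inner shift sends $x$ to $x+n-1$, and since $i<n$ we get $x+n-1\geq k+n\geq k+i+1$, forcing $\mathbbm{1}_{k+i+1}(x+n-1)=1$ as well; both indicators fire and contribute $(n-1)+(m-1)=m+n-2$, exactly the right-hand side. In the complementary case $x\leq k$ both indicators vanish, since $x\leq k<k+i+1$ already because $i\geq 0$, and both sides equal $0$. I expect the only real subtlety to be bookkeeping on these threshold inequalities—confirming that crossing the first threshold $k+1$ always entails crossing the shifted threshold $k+i+1$, which is precisely what $i<n$ guarantees. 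This is the arithmetic shadow of the geometric picture: replacing position $k$ by $n$ slots and then one of those slots by $m$ slots leaves $m+n-1$ slots in place of $k$, and the indicator computation merely records that fact faithfully.
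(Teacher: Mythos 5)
Your proof is correct and follows essentially the same route as the paper: apply the indicator-function formula for $\Shiftn$ twice and use $0\leq i<n$ to show that $\mathbbm{1}_{k+i+1}$ evaluated at the shifted first coordinate agrees with $\mathbbm{1}_{k+1}(x)$ (and likewise for the second coordinate with thresholds $k+i$ and $k$). Your case analysis merely makes explicit the one-line identity the paper asserts, and your coefficient $m+n-2$ is the correct one (the paper's final display writes $m+n-1$, which is a typo).
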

  \begin{proof}
    From formulas for $\Shiftn$ operators:
    
    \centerline{
      \begin{tabular}{l@{\ }l}
        $\Shift{k+i,m}{\Shift{k,n}{(x,y)}}$ & $=\Shift{k+i,m}{x+\mathbbm{1}_{k+1}(x)\times(n-1),y+\mathbbm{1}_{k}(y)\times(n-1)}$\\
        & $=(x',y')$\\
      \end{tabular}
    }  
    
    \centerline{
      \begin{tabular}{l@{\ }l}
        with & $x'=x+\mathbbm{1}_{k+1}(x)\times(n-1)+\mathbbm{1}_{k+i+1}(x+\mathbbm{1}_{k+1}(x)\times(n-1))\times(m-1)$\\
        and & $y'=y+\mathbbm{1}_{k}(y)\times(n-1)+\mathbbm{1}_{k+i}(y+\mathbbm{1}_{k}(y)\times(n-1))\times(m-1)$.\\
      \end{tabular}
    }
    
    Since $0\leq i < n$, then
    
    \centerline{
      $\mathbbm{1}_{k+1}(x)=\mathbbm{1}_{k+i+1}(x+\mathbbm{1}_{k+1}(x)\times(n-1))$.
    }
    
    Finally,
    
    \centerline{
      \begin{tabular}{l@{\ }l}
        $(x',y')$ & $=(x+\mathbbm{1}_{k+1}(x)\times(m+n-1),y+\mathbbm{1}_{k}(y)\times(m+n-1))$  \\
        & $=\Shift{k,m+n-1}{x,y}$.
      \end{tabular}
    }
    
  \end{proof}

  The definition domain of $\Shiftn$ operators can be extended to sets of couples by the following way. Let $T\subset\Intervalle{1}{n}^2_\leq$ for any integer $n$:

  \centerline{$\Shift{k,n}{T}=\{\Shift{k,n}{(x,y)}:(x,y)\in T\},\,\Dec{k}{T}=\{\Dec{k}{(x,y)}:(x,y)\in T\}$.}

  From an algebraic point of view, operators $\Shiftn$ and $\Decn$ are linearly extended. Relations are still valid:

  \begin{proposition}\label{relationsSD}
    Let $k,\ell,n,i,m$ be five integers, and let $T$ be a set of couples $(x,y)$ with $x\leq y$. Then: 

    \begin{enumerate}
      \item  $\Dec{k}{\Dec{\ell}{T}} :=\Dec{\ell}{\Dec{k}{T}}$,
      \item  $\Dec{n}{\Shift{i,m}{T}}=\Shift{i+n,m}{\Dec{n}{T}}$,
      \item $\Shift{k,m}{\Shift{\ell,n}{T}}=\Shift{\ell+m-1,n}{\Shift{k,m}{T}}$ if $k\leq\ell$.
      \item $\Shift{k+i,m}{\Shift{k,n}{T}}=\Shift{k,m+n-1}{T}$ if $0\leq i< n$.
    \end{enumerate}
  \end{proposition}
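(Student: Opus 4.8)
The plan is to reduce each of the four identities to the corresponding statement on a single couple, all of which have already been established in the preceding lemmas. The essential observation is that the action of $\Decn$ and $\Shiftn$ on a set is defined element-wise, $\Shift{k,n}{T}=\{\Shift{k,n}{(x,y)}:(x,y)\in T\}$ and likewise for $\Decn$; equivalently, in the boolean-linear setting these operators are the linear extensions of maps on couples. Consequently, whenever two composite operators $\Phi$ and $\Psi$ on couples agree on every couple with $x\leq y$, their set images agree as well: $\Phi(T)=\{\Phi(x,y):(x,y)\in T\}=\{\Psi(x,y):(x,y)\in T\}=\Psi(T)$ for every $T$ whose couples all satisfy $x\leq y$. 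So the whole proposition follows once the four identities are known on couples, provided the intermediate couples produced by the inner operator still satisfy the ordering hypothesis $x\leq y$ required by the outer lemma.

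First I would record this stability remark: each of $\Decn$ and $\Shiftn$ sends a couple with $x\leq y$ to a couple with $x\leq y$. This is immediate from $\Dec{k}{x,y}=(x+k,y+k)$, and from the three-line case analysis defining $\Shiftn$, since the shift magnitude is at least $1$. Hence applying an inner operator to some $(x,y)\in T$ always yields a couple to which the relevant couple-level lemma legitimately applies.

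With this in hand, identities~1, 2 and 4 are immediate: identity~1 is the element-wise version of the commutation \eqref{CommutationDec}; identity~2 is Lemma~\ref{CompositionDecShift} applied to each couple of $T$; and identity~4 is Lemma~\ref{CompositionShift} applied to each couple of $T$. In each case one quotes the couple-level equality and passes to the set image through the reduction above.

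The only point requiring genuine care is identity~3, and this is where I expect the sole obstacle to lie: Lemma~\ref{CommutationShift} establishes the couple-level identity only under the \emph{strict} inequality $k<\ell$, whereas the proposition asserts identity~3 under $k\leq\ell$. For $k<\ell$ the element-wise argument applies verbatim, so it remains to treat the boundary case $k=\ell$, which Lemma~\ref{CommutationShift} does not cover. Here I would instead invoke Lemma~\ref{CompositionShift}: on the left-hand side, taking $i=0$ gives $\Shift{k,m}{\Shift{k,n}{x,y}}=\Shift{k,m+n-1}{x,y}$; on the right-hand side, taking $i=m-1$ (so that $0\leq i<m$) gives $\Shift{k+m-1,n}{\Shift{k,m}{x,y}}=\Shift{k,m+n-1}{x,y}$ as well. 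Both composites thus coincide on every couple with $x\leq y$, and the set version follows as before. Assembling the cases $k<\ell$ and $k=\ell$ for identity~3 together with identities~1, 2 and 4 then completes the proof.
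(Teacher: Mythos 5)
Your proof is correct and follows the route the paper intends: the paper gives no written argument for this proposition, merely asserting that since $\Decn$ and $\Shiftn$ are extended to sets element-wise, the couple-level identities \eqref{CommutationDec}, Lemma~\ref{CompositionDecShift}, Lemma~\ref{CommutationShift} and Lemma~\ref{CompositionShift} carry over verbatim. Your extra treatment of the boundary case $k=\ell$ in item~3 --- which Lemma~\ref{CommutationShift} does not cover, being stated only for $k<\ell$ --- by applying Lemma~\ref{CompositionShift} once with $i=0$ and once with $i=m-1$ so that both sides equal $\Shiftnk{k,m+n-1}$ on each couple, is a correct repair of a small gap the paper leaves unaddressed (though only the strict case is actually used later, in Proposition~\ref{CommutationTilde}).
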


\subsection{The Operad of multitildes}

  Multitildes are $k$-ary operators for any integer $k>0$ represented by a subset of $\Intervalle{1}{n}^2_\leq$. The set of $k$-ary multitildes is denoted by $\mathcal{T}_k$. The set of multitildes (\emph{i.e.} $\bigcup_{k\in\mathbb{N}} \mathcal{T}_k$) is denoted by $\mathcal{T}$.

  Let $k,n$ be two integers such that $1\leq k\leq n$. The partial composition $\circ_k$ of multitildes can be expressed from $\Shiftn$ and $\Decn$ operators as follows:

    \centerline{
      $\circ_k:\left\{
	\begin{array}{ccc}
	  {\cal T}_m\otimes{\cal T}_n & \rightarrow & {\cal T}_{n+m-1}\\
	  T_1 \circ_k T_2& = & \Shift{k,n}{T_1}\cup\Dec{k-1}{T_2}
	\end{array}
      \right.$
    }

Note this composition is straightforwardly the same than those defined by one of the authors in \cite{Mig10}.
  It satisfies the property of commutativity and associativity defining an operad.

  \begin{proposition}[\it Commutativity]\label{CommutationTilde}
    Let $m,n,p, k,\ell$ be five integers such that $1\leq k< \ell\leq m$. Let $T_1\in{\cal T}_{m}$, $T_2\in{\cal T}_{n}$ and $T_3\in{\cal T}_{p}$ be three multitildes. Then:

    \centerline{$(T_1\circ_\ell T_2)\circ_kT_3=(T_1\circ_kT_3)\circ_{\ell+p-1}T_2$.}
  \end{proposition}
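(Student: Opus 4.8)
The plan is to expand both sides of the identity using the definition $T_1\circ_kT_2=\Shift{k,n}{T_1}\cup\Dec{k-1}{T_2}$ together with the fact that, being defined as set-images, $\Shiftn$ and $\Decn$ distribute over unions. Writing $p$ for the arity of $T_3$, the left-hand side becomes
\[
(T_1\circ_\ell T_2)\circ_kT_3=\Shift{k,p}{\Shift{\ell,n}{T_1}}\cup\Shift{k,p}{\Dec{\ell-1}{T_2}}\cup\Dec{k-1}{T_3},
\]
while the right-hand side becomes
\[
(T_1\circ_kT_3)\circ_{\ell+p-1}T_2=\Shift{\ell+p-1,n}{\Shift{k,p}{T_1}}\cup\Shift{\ell+p-1,n}{\Dec{k-1}{T_3}}\cup\Dec{\ell+p-2}{T_2}.
\]
It then suffices to prove that the three summands on each side coincide under the natural pairing: the pure-$T_1$ terms, the $T_2$ terms, and the $T_3$ terms.

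The pure-$T_1$ terms match immediately: by item 3 of Proposition \ref{relationsSD}, and since the hypothesis $k<\ell$ in particular gives $k\leq\ell$, one has $\Shift{k,p}{\Shift{\ell,n}{T_1}}=\Shift{\ell+p-1,n}{\Shift{k,p}{T_1}}$. The remaining two pairings I would settle by a direct computation on a single couple $(x,y)$, using the characteristic-function form $\Shift{j,q}{x,y}=(x+\mathbbm{1}_{j+1}(x)\times(q-1),y+\mathbbm{1}_{j}(y)\times(q-1))$ from the earlier lemma. For the $T_2$ terms, every $(x,y)\in T_2$ satisfies $1\leq x\leq y\leq n$, so after applying $\Dec{\ell-1}{}$ both coordinates are at least $\ell$; since $\ell>k$ the two indicators of $\Shift{k,p}{}$ fire and it acts as a pure shift by $p-1$, giving $\Shift{k,p}{\Dec{\ell-1}{T_2}}=\Dec{\ell+p-2}{T_2}$. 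For the $T_3$ terms, every $(x,y)\in T_3$ satisfies $y\leq p$, so the couple $\Dec{k-1}{(x,y)}$ has both coordinates at most $p+k-1$; because $k<\ell$ forces $p+k-1<\ell+p-1$, both indicators of $\Shift{\ell+p-1,n}{}$ vanish and the shift acts as the identity, giving $\Shift{\ell+p-1,n}{\Dec{k-1}{T_3}}=\Dec{k-1}{T_3}$.

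The only delicate point — and the step I expect to be the main obstacle — is the bookkeeping that converts the hypothesis $k<\ell$ together with the arity bounds $y\leq n$ (for $T_2$) and $y\leq p$ (for $T_3$) into the exact indicator values: $\mathbbm{1}_{k+1}=\mathbbm{1}_{k}=1$ on $\Dec{\ell-1}{T_2}$ and $\mathbbm{1}_{\ell+p}=\mathbbm{1}_{\ell+p-1}=0$ on $\Dec{k-1}{T_3}$. Once these inequalities are verified carefully, the three pairings line up and the two expressions are literally the same union of sets, which establishes the commutativity relation. I would also remark that this statement is precisely the operadic ``Associativity 1'' rule (\ref{ass1}) under the correspondence $i=\ell$, $j=k$, $q=p$, so the computation simultaneously confirms that the graphical grafting interpretation of the composition is respected.
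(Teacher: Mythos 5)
Your proof is correct and follows essentially the same route as the paper's: both sides are expanded into the same three-term unions, the $T_1$ terms are matched via item~3 of Proposition~\ref{relationsSD}, and the $T_2$ and $T_3$ terms are shown to coincide by bounding the coordinates (in particular, the observation that $\Shiftnk{\ell+p-1,n}$ fixes $\Dec{k-1}{T_3}$ because its entries are at most $k+p-1<\ell+p-1$ is exactly the paper's argument). The only cosmetic difference is that for the $T_2$ term the paper invokes item~2 of Proposition~\ref{relationsSD} to commute $\Shiftnk{k,p}$ past $\Decnk{\ell-1}$ and then evaluates the resulting degenerate shift, whereas you verify the two indicator values directly; the content is identical.
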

  \begin{proof}
    From the definition of composition:

    \centerline{$(T_1\circ_\ell T_2)\circ_k T_3=\Shift{k,p}{\Shift{\ell,n}{T_1}}\cup
    \Shift{k,p}{\Dec{\ell-1}{T_2}}\cup \Dec{k-1}{T_3}.$}

    Following Proposition~\ref{relationsSD} on the sets of couples computed in the right member:

    \begin{enumerate}
      \item From relation~(3): $\Shift{k,p}{\Shift{\ell,n}{T_1}}=\Shift{\ell+p-1,n}{\Shift{k,p}{T_1}}$;

      \item Since $k< \ell$, from relation~(2): 
      
      \centerline{$\Shift{k,p}{\Dec{\ell-1}{T_2}}=\Dec{\ell-1}{\Shift{k-\ell+1,p}{T_2}}=\Dec{\ell+p-2}{T_2}$.}
    \end{enumerate}

    Consequently: $(T_1\circ_\ell T_2)\circ_k T_3= \Shift{\ell+p-1,n}{\Shift{k,p}{T_1}}\cup
    \Dec{\ell+p-2}{T_2}\cup \Dec{k-1}{T_3}$.
    
    Since $k<\ell$, the operator $\Shiftn_{\ell+p-1,n}$ does not modify the set $\Dec{k-1}{T_3}$ of couples the elements of couples of which are upper-bounded by $k+p-1<\ell+p-1$. Hence: $\Dec{k-1}{T_3}=\Shift{\ell+p-1,n}{\Dec{k-1}{T_3}}$.
    
    Then: $(T_1\circ_\ell T_2)\circ_k T_3=\Shift{\ell+p-1,n}{\Shift{k,p}{T_1}}\cup\Shift{\ell+p-1,n}{\Dec{k-1}{T_3}}
    \cup \Dec{\ell+p-2}{T_2}$.

    Finally: $(T_1\circ_\ell T_2)\circ_k T_3=(T_1\circ_kT_3)\circ_{\ell+p-1}T_2$.
    
  \end{proof}

  The associativity of $\circ_k$ is shown by the following proposition.

  \begin{proposition}\label{AssociativiteTilde}
    Let $m,n,p,i,k$ be five integers such that $i<n$. Let $T_1\in{\cal T}_m$, $T_2\in{\cal T}_n$ and $T_3\in{\cal T}_p$ be three multitildes. Then:

    \centerline{$T_1\circ_k(T_2\circ_iT_3)=(T_1\circ_kT_2)\circ_{k+i-1}T_3$.}
  \end{proposition}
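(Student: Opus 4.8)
The plan is to expand both members of the identity directly from the definition $T_1\circ_k T_2=\Shift{k,n}{T_1}\cup\Dec{k-1}{T_2}$ and to observe that each side becomes a union of three sets of couples: one coming from $T_1$, one from $T_2$, and one from $T_3$. Because $\Shiftn$ and $\Decn$ are defined couple by couple and extended to sets elementwise, they distribute over the unions, which is exactly what produces this clean three-term splitting. Once both sides are in this form, it suffices to match the three corresponding pieces, each match being an instance of one of the commutation relations of Proposition~\ref{relationsSD}.

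First I would compute the left member. Since $T_2\circ_i T_3=\Shift{i,p}{T_2}\cup\Dec{i-1}{T_3}\in\mathcal{T}_{n+p-1}$, applying $\circ_k$ (whose second slot now has arity $n+p-1$) and distributing $\Dec{k-1}{\cdot}$ over the union gives
\[
T_1\circ_k(T_2\circ_i T_3)=\Shift{k,n+p-1}{T_1}\cup\Dec{k-1}{\Shift{i,p}{T_2}}\cup\Dec{k-1}{\Dec{i-1}{T_3}}.
\]
For the right member, $T_1\circ_k T_2=\Shift{k,n}{T_1}\cup\Dec{k-1}{T_2}\in\mathcal{T}_{m+n-1}$, and composing with $T_3$ at position $k+i-1$ while distributing $\Shift{k+i-1,p}{\cdot}$ over the union yields
\[
(T_1\circ_k T_2)\circ_{k+i-1}T_3=\Shift{k+i-1,p}{\Shift{k,n}{T_1}}\cup\Shift{k+i-1,p}{\Dec{k-1}{T_2}}\cup\Dec{k+i-2}{T_3}.
\]

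It then remains to identify the three pieces. For the $T_1$-piece I would invoke relation~(4) of Proposition~\ref{relationsSD} with its parameters specialised to offset $i-1$ and width $p$, giving $\Shift{k+i-1,p}{\Shift{k,n}{T_1}}=\Shift{k,n+p-1}{T_1}$. For the $T_2$-piece I would use relation~(2) with decrement $k-1$, which reads $\Dec{k-1}{\Shift{i,p}{T_2}}=\Shift{k+i-1,p}{\Dec{k-1}{T_2}}$. For the $T_3$-piece, relation~(1) (additivity of the $\Decn$ operators) gives $\Dec{k-1}{\Dec{i-1}{T_3}}=\Dec{k+i-2}{T_3}$. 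Matching the three pieces term by term then establishes the claimed equality.

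The only delicate point—and hence the main obstacle—is the index bookkeeping: one must check that the shift widths ($n+p-1$ on the left versus the nested $n$ and $p$ on the right) and the decrement amounts ($k-1$, $i-1$, $k+i-2$) line up correctly, and above all that the side condition $0\le i-1<n$ required by relation~(4) is met. This is guaranteed precisely by the hypothesis $i<n$ together with $i\ge 1$, so the relation applies and the argument goes through without further case analysis.
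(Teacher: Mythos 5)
Your proof is correct and follows essentially the same route as the paper: both expand the two sides into the same three-term unions via distributivity of $\Shiftn$ and $\Decn$ over unions, and then identify the pieces using relations (1), (2) and (4) of Proposition~\ref{relationsSD}, with the hypothesis $i<n$ (together with $i\geq 1$) guaranteeing the side condition for relation (4). The only cosmetic difference is that you expand both sides and match them term by term, whereas the paper rewrites the left-hand side until it takes the form of the right-hand side.
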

  \begin{proof}
    According to definition of $\circ_i$:

    \centerline{
      $\begin{array}{rcl}
	T_1\circ_k(T_2\circ_i T_3) & = & T_1\circ_k\left(\Shift{i,p}{T_2}\cup\Dec{i-1}{T_3}\right)\\
	& = & \Shift{k,n+p-1}{T_1}\cup\Dec{k-1}{\Shift{i,p}{T_2}}\cup\Dec{k-1}{\Dec{i-1}{T_3}}.
      \end{array}$
    }

    Following Proposition~\ref{relationsSD} on the sets of couples computed in the right member:

    \begin{enumerate}
      \item from relation~(1): $\Dec{k-1}{\Dec{i-1}{T_3}}=\Dec{k+i-2}{T_3}$;

      \item from relation~(2): $\Dec{k-1}{\Shift{i,p}{T_2}}=\Shift{k+i-1,p}{\Dec{k-1}{T_2}}$;

      \item Since $i<n$, from relation~(4): $\Shift{k,n+p-1}{T_1}=\Shift{k+i-1,p}{\Shift{k,n}{T_1}}$.
    \end{enumerate}

    Consequently: $T_1\circ_k(T_2\circ_i T_3)  =  \Shift{k+i-1,p}{\Shift{k,n}{T_1}} \cup \Shift{k+i-1,p}{\Dec{k-1}{T_2}}\cup\Dec{k+i-2}{T_3}$.

    The right side of the expression is the definition of $\circ_{k+i-1}$ and finally:

    \centerline{
      $\begin{array}{rcl}
	T_1\circ_k(T_2\circ_i T_3) & = & (\Shift{k,n-1}{T_1}\cup\Dec{k-1}{T_2})\circ_{k+i-1} T_3\\
	& = & (T_1\circ_k T_2)\circ_{k+i-1}T_3.
      \end{array}$
    }
    
  \end{proof}
  
  From the partial composition $\circ_k$, ones can define the composition $\circ$ as follows:
  
  \centerline{
      $\circ:\left\{
	\begin{array}{ccc}
	  {\cal T}_m \otimes {\cal T}_{n_1}\otimes \cdots \otimes {\cal T}_{n_m}  & \rightarrow & {\cal T}_{n_1+\cdots +n_m}\\
	  T\circ (T'_1,\ldots,T'_m) & = &   (\cdots((T_1 \circ_m T'_m)\circ_{m-1} T'_{m-1})\cdots)\circ_1 T'_1.
	\end{array}
      \right.$
    }
  
  \begin{theorem}\label{TildeOperad}
    The structure $(\mathcal{T},\circ)$ is an operad.
  \end{theorem}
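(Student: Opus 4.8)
The plan is to use the standard fact that an operad can be presented in two equivalent ways: through the global composition $\circ$ subject to the associativity (\ref{associativity}) and identity (\ref{identity}) axioms, or through the partial compositions $\circ_k$ subject to the two associativity rules (\ref{ass1})--(\ref{ass2}) together with a two-sided unit. Since the global $\circ$ on $\mathcal{T}$ has been \emph{defined} as the prescribed iterate of the $\circ_k$, it suffices to check that the $\circ_k$ obey the partial axioms and a unit law, and then transfer these to $\circ$. The decisive point is that the two partial-composition axioms are \emph{already} proved: Proposition~\ref{CommutationTilde} is verbatim Associativity~1, i.e. equation~(\ref{ass1}), under the identification ${\bf p}_1={\bf T}_1$, ${\bf p}_2=T_2$, ${\bf p}_3=T_3$, $i=\ell$, $j=k$, $q=p$; and Proposition~\ref{AssociativiteTilde} is verbatim Associativity~2, equation~(\ref{ass2}). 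Thus essentially all analytic content is carried by Proposition~\ref{relationsSD} and these two corollaries. The arity check is immediate: starting from $T\in\mathcal{T}_m$ and grafting $T'_m,\dots,T'_1$ raises the arity from $m$ to $m+\sum_k(n_k-1)=n_1+\cdots+n_m$, matching the stated codomain of $\circ$.

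First I would exhibit the unit. A $1$-ary multitilde is a subset of $\Intervalle{1}{1}^2_\leq=\{(1,1)\}$, and I claim the identity is ${\bf 1}:=\emptyset\in\mathcal{T}_1$. For the right unit, for any $T\in\mathcal{T}_m$ and $1\le k\le m$ one has $T\circ_k{\bf 1}=\Shift{k,1}{T}\cup\Dec{k-1}{\emptyset}$; the operator $\Shiftnk{k,1}$ inserts $n-1=0$ elements, so all three cases of its definition collapse to $(x,y)$ and it acts as the identity on every couple, while $\Dec{k-1}{\emptyset}=\emptyset$, whence $T\circ_k{\bf 1}=T$. For the left unit, ${\bf 1}\circ_1 T=\Shift{1,n}{\emptyset}\cup\Dec{0}{T}=\Dec{0}{T}=T$, since $\Decnk{0}$ fixes every couple. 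Iterating the right-unit identity along the definition of $\circ$ gives ${\bf p}\circ({\bf 1},\dots,{\bf 1})={\bf p}$, and the left-unit identity gives ${\bf 1}\circ{\bf p}={\bf p}$, which is exactly (\ref{identity}).

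It then remains to promote the partial rules to the global associativity (\ref{associativity}). The clean route is to invoke the standard equivalence between the partial and the global presentations of an operad (see \cite{Lei04,LV10}): once the $\circ_k$ satisfy (\ref{ass1}), (\ref{ass2}) and admit a two-sided unit, the iterate defining $\circ$ automatically satisfies (\ref{associativity}). If a self-contained argument is preferred, I would expand both members of (\ref{associativity}) as nested partial compositions and reduce one to the other by induction on $n$: equation~(\ref{ass2}) telescopes each inner composition ${\bf p}_i\circ(\cdots)$ into a sequence of $\circ_k$ applied to the already-formed ${\bf p}\circ({\bf p}_1,\dots,{\bf p}_n)$, while equation~(\ref{ass1}) commutes grafts performed at disjoint slots so that the leaf relabelings on the two sides coincide. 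The convention of grafting slots in the decreasing order $m,m-1,\dots,1$ is precisely what keeps the indices of the not-yet-used slots unshifted, since $\Shiftnk{k,n}$ only moves couples whose entries are at least $k$; this is what lets the $\circ_k$ be applied with fixed, literal indices.

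The only genuine obstacle is bookkeeping rather than mathematics: one must check that the reindexing produced by the $\Shiftn$ and $\Decn$ operators inside each $\circ_k$ matches the leaf relabeling prescribed by the tree picture, so that the two sides of (\ref{associativity}) land on literally the same subset of couples. Every nontrivial commutation needed for this was isolated in Proposition~\ref{relationsSD}, and the remainder is routine index chasing; no new phenomenon arises beyond what Propositions~\ref{CommutationTilde} and~\ref{AssociativiteTilde} already encode.
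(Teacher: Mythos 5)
Your proof is correct and follows the same route the paper intends: the paper gives no explicit proof of Theorem~\ref{TildeOperad}, treating it as an immediate consequence of Propositions~\ref{CommutationTilde} and~\ref{AssociativiteTilde}, which are exactly the two partial-composition associativity axioms. You in fact supply two details the paper leaves implicit --- the identification of the unit as $\emptyset\in\mathcal{T}_1$ with the verification that $\Shiftnk{k,1}$ and $\Decnk{0}$ act trivially, and the appeal to the standard equivalence between the partial and global presentations of an operad --- both of which are accurate.
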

  
\section{Actions on languages\label{SecActLan}}

In this section, we recover the construction of language associated to a multitildes expression (\cite{CCM11e}) by means of an operad based on boolean vectors. We show the connection between this operad and the operad $\mathcal T$.
\subsection{An operad of boolean vectors and its action on languages}

Denote by ${\mathbb B}=\{0,1\}$ and $\mathcal B_n=\mathcal P(\mathbb B^n)$. We endow the set $\mathcal B=\bigcup_{n\in\N}\mathcal B_n$ with a structure of operad whose composition is   as follows:\\
 Let $E\in \mathcal B_m$ and $F\in \mathcal B_n$, the set $E\circ_k F$ is defined by
\[\begin{array}{rcl}
E\circ_k F&:=&\{[e_1,\dots,e_{k-1},e_kf_1,\dots,e_kf_n,e_{k+1},\dots,e_m]:[e_1,\dots,e_m]\in E,\ [f_1,\dots,f_n]\in F\}.\end{array}
\]
\begin{proposition}
The structure $(\mathcal B,\circ)$ is an operad.
\end{proposition}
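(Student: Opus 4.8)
The plan is to verify the partial-composition axioms directly, exactly as was done for the operad $\mathcal{T}$: exhibit a unit and check the two associativity rules (\ref{ass1}) and (\ref{ass2}), from which the operad structure on $(\mathcal{B},\circ)$ follows. I would take the unit to be $\mathbf{1}=\{[1]\}\in\mathcal{B}_1$, the singleton containing the length-one all-ones vector.

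First I would record a reduction that removes most of the bookkeeping. The composition $E\circ_k F$ is defined as a set-builder ranging over pairs $(\mathbf{e},\mathbf{f})\in E\times F$, so it distributes over unions in each argument. Writing $\mathbf{e}\circ_k\mathbf{f}=[e_1,\dots,e_{k-1},e_kf_1,\dots,e_kf_n,e_{k+1},\dots,e_m]$ for the single vector produced by one pair, every iterated composite of $\circ_k$'s is the union, over all choices of one vector from each operand, of the corresponding composite of single vectors. Consequently it suffices to check identity and associativity at the level of individual boolean vectors; the set-level statements then follow by taking unions.

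For the identity axiom (\ref{identity}) I would compute that $[e_1,\dots,e_m]\circ_k[1]=[e_1,\dots,e_m]$ since $e_k\cdot 1=e_k$, and that $[1]\circ_1[f_1,\dots,f_n]=[f_1,\dots,f_n]$ since $1\cdot f_j=f_j$; iterating the first over all slots gives $\mathbf{p}\circ(\mathbf{1},\dots,\mathbf{1})=\mathbf{p}$ and the second gives $\mathbf{1}\circ\mathbf{p}=\mathbf{p}$. For Associativity 2, I would expand both $\mathbf{e}\circ_j(\mathbf{f}\circ_i\mathbf{g})$ and $(\mathbf{e}\circ_j\mathbf{f})\circ_{i+j-1}\mathbf{g}$ coordinatewise; both produce the vector reading $e_1,\dots,e_{j-1}$, then $e_jf_1,\dots,e_jf_{i-1}$, then the block coming from $\mathbf{g}$, then $e_jf_{i+1},\dots,e_jf_n$, then $e_{j+1},\dots,e_m$, and the only coordinates one must compare are those of the $\mathbf{g}$-block, where one side yields $e_j(f_ig_r)$ and the other $(e_jf_i)g_r$ — equal by associativity of the boolean product. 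For Associativity 1, with $j<i$, I would check that $(\mathbf{e}\circ_i\mathbf{f})\circ_j\mathbf{g}$ and $(\mathbf{e}\circ_j\mathbf{g})\circ_{i+q-1}\mathbf{f}$ both graft $\mathbf{g}$ (scaled by $e_j$) into slot $j$ and $\mathbf{f}$ (scaled by $e_i$) into slot $i$; since $j<i$ these two blocks occupy disjoint ranges of coordinates and do not interact, the index shift $i\mapsto i+q-1$ accounting exactly for the $q-1$ extra coordinates inserted at the earlier slot $j$.

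I do not expect a genuine obstacle here: the conceptual content is only that the scalar $e_k$ distributes over the inserted vector and that boolean multiplication is associative, while the rest is the index arithmetic of grafting, identical to that of the free operad of planar trees. The one place to stay careful is the relabeling of coordinates in Associativity~1 and~2, so I would keep the explicit coordinate lists above to make the matching unambiguous, after which the conclusion that $(\mathcal{B},\circ)$ is an operad is immediate.
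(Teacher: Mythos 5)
Your proposal is correct and follows essentially the same route as the paper: exhibit the unit $\{[1]\}$ and verify the two partial-composition associativity rules by expanding the definition coordinatewise, the key points being $e_k\cdot 1=e_k$, associativity of the boolean product for Associativity~2, and disjointness of the two grafted blocks for Associativity~1. Your preliminary reduction to single vectors is just a restatement of the paper's set-builder computation, so the two arguments are the same in substance.
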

\begin{proof}
 The set ${\bf 1}:=\{[1]\}$ is the identity for the composition $\circ_k$. Indeed,
\[
{\bf 1}\circ_1 F=F
\]
and
\[\begin{array}{rcl}
E\circ_k\{1\}
&:=&\{[e_1,\dots,e_{k-1},0,e_{k+1},\dots,e_m]:[e_1,\dots,e_{k-1},0,e_{k+1},\dots,e_m]\in E\}
\cup \\&& \{[e_1,\dots,e_{k-1},1,e_{k+1},\dots,e_m]:[e_1,\dots,e_{k-1},1,e_{k+1},\dots,e_m]\in E\}=E.\end{array}
\]
It remains to prove the two associativity rules:
\begin{enumerate}
\item Let $E\in \mathcal B_m$, $F\in {\mathcal B_n}$, $G\in\mathcal B_p$ and $1\leq j<i\leq m$. One has
\[
\begin{array}{rcl}
(E\circ_i F)\circ_j G&=&\{[e_1,\dots, e_{j-1},e_jg_1,\dots,e_jg_n,e_{j+1},\dots, e_{i-1},e_if_1,\dots, e_if_n,e_{i+1},\dots,e_n]\\&&:[e_1,\dots, e_m]\in E, [f_1,\dots, f_n]\in F, [g_1,\dots, g_p]\in G\}\\
&=& \{[e_1,\dots,e_{j-1},e_jg_1,\dots,e_jg_p,e_{j+1},\dots,e_m]:[e_1,\dots, e_m]\in E,\\&& [g_1,\dots, g_p]\in G \}\circ_{j+p-1} F\\
&=&(E\circ_j G)\circ_{j+p-1} F.
\end{array}
\]
\item Let $E\in \mathcal B_m$, $F\in {\mathcal B_n}$, $G\in\mathcal B_p$, $1\leq i\leq n$ and $1\leq j\leq m$. One has
\[
\begin{array}{rcl}
E\circ_j(F\circ_i G)&=&\{[e_1,\dots,e_{j-1},e_jf_1,e_jf_ig_1,\dots,e_jf_ig_p,e_jf_{i+1},\dots,
e_jf_n,e_{j+1},\dots,e_m]:\\
&& [e_1,\dots,e_m]\in E, [f_1,\dots,f_n]\in F, [g_1,\dots,g_p]\in G\}\\
&=& (E\circ j F)\circ_{i+j-1} G.
\end{array}
\]

\end{enumerate}
This proves that $\mathcal B$ is an operad.
\end{proof}
We define the action of $\mathcal B_n$ on the languages by
\[
E(L_1,\dots,L_n)=\bigcup_{[e_1,\dots,e_n]\in E}L_1^{e_1}\cdot L_2^{e_2}\cdots L_n^{e_n}.
\]
for each $E\in\mathcal B$.
\begin{proposition}
The sets $2^{\Sigma^*}$ (the set of the languages over $\Sigma$) and $\mathrm{Reg}(\Sigma^*)$ (the set of regular languages over $\Sigma$) are both $\mathcal B$-modules.
\end{proposition}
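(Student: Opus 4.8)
The plan is to verify directly the single defining identity of a $\mathcal{B}$-module: for every $E\in\mathcal{B}_m$, $F\in\mathcal{B}_n$, every $k$ with $1\le k\le m$, and every list of languages $L_1,\dots,L_{m+n-1}$,
$$E(L_1,\dots,L_{k-1},F(L_k,\dots,L_{k+n-1}),L_{k+n},\dots,L_{m+n-1})=(E\circ_kF)(L_1,\dots,L_{m+n-1}).$$
Closure of $2^{\Sigma^*}$ under the action is automatic since it is the set of \emph{all} languages, and the identity $\mathbf{1}=\{[1]\}$ acts as the identity map because $\mathbf{1}(L)=L^1=L$; hence only the displayed equality requires work. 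I would first abbreviate $M:=F(L_k,\dots,L_{k+n-1})=\bigcup_{[f_1,\dots,f_n]\in F}L_k^{f_1}\cdots L_{k+n-1}^{f_n}$ and expand the left-hand side as $\bigcup_{[e_1,\dots,e_m]\in E}L_1^{e_1}\cdots L_{k-1}^{e_{k-1}}\,M^{e_k}\,L_{k+n}^{e_{k+1}}\cdots L_{m+n-1}^{e_m}$.

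The heart of the argument is to analyse the factor $M^{e_k}$ according to the two values of $e_k$. When $e_k=1$, I would use that concatenation distributes over (arbitrary) unions to turn $M^{1}=M$ into the union over $[f_1,\dots,f_n]\in F$ of $L_k^{f_1}\cdots L_{k+n-1}^{f_n}$; pushing the surrounding factors inside merges the two unions into a single double union whose summands are indexed exactly by the vectors $[e_1,\dots,e_{k-1},e_kf_1,\dots,e_kf_n,e_{k+1},\dots,e_m]$ with $e_k=1$, since then $e_kf_j=f_j$. When $e_k=0$, the factor $M^{0}=\{\varepsilon\}$ makes the whole block over positions $k,\dots,k+n-1$ collapse; this matches the right-hand side because the corresponding composed vector has $e_kf_j=0$ throughout that block, so $L_k^{0}\cdots L_{k+n-1}^{0}=\{\varepsilon\}$. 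Matching the two cases term by term against $E\circ_kF=\{[e_1,\dots,e_{k-1},e_kf_1,\dots,e_kf_n,e_{k+1},\dots,e_m]\}$ yields the equality.

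For $\mathrm{Reg}(\Sigma^*)$ I would argue that it is a sub-$\mathcal{B}$-module: since each $E\subseteq\mathbb{B}^n$ is finite, $E(L_1,\dots,L_n)$ is a finite union of concatenations of the $L_i$ and of $\{\varepsilon\}$, so regularity of the inputs forces regularity of the output, and $\mathrm{Reg}(\Sigma^*)$ is stable under the action; the defining identity is then inherited verbatim from the computation above, which was valid for arbitrary languages. The only delicate point in the whole verification is the index bookkeeping together with the collapse at $e_k=0$: one must check that when the block degenerates to $\{\varepsilon\}$ the single surviving summand on the left coincides with the family of right-hand summands coming from the vectors with $e_kf_j=0$ (which all coincide, and are present as soon as $F$ is nonempty, as is the case for every multitilde since the all-ones vector always occurs). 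This collapse is where I expect the main, if modest, obstacle to lie.
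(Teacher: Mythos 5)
Your proof is correct and follows essentially the same route as the paper: expand the left-hand side, use $e_k\in\{0,1\}$ to merge the union over $F$ into the union over $E$ so that the index vectors become exactly those of $E\circ_k F$, and then note that a finite union of catenations of regular languages is regular to get the statement for $\mathrm{Reg}(\Sigma^*)$. Your explicit case split at $e_k=0$ (and the observation that the collapse only matches the right-hand side when $F\neq\emptyset$, which the paper's one-line ``since $e_i\in\{0,1\}$'' silently glosses over, and which is harmless for sets arising from multitildes since they always contain the all-ones vector) is if anything slightly more careful than the published argument.
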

\begin{proof}
Let us show first the compatibility of the action previously defined with the composition.
One has
\[\begin{array}{rcl}
E(L_1,\dots,L_{i-1},F(L_i,\dots,L_{i+n-1}),L_{i+n},\dots,L_{m+n-1})&=&\displaystyle
\bigcup_{[e_1,\cdots,e_m]\in E}L_1^{e_1}\cdots L_{i-1}^{e_{i-1}}
\left(L_i^{f_1}\cdots L_{i+n-1}^{f_n}\right)^{e_i}L_{i+n}^{e_i+1}\cdots L_{m+n-1}^{e_m}.\end{array}
\]
Since $e_i\in\{0,1\}$,
\[\begin{array}{rcl}
E(L_1,\dots,L_{i-1},F(L_i,\dots,L_{i+n-1}),L_{i+n},\dots,L_{m+n-1})&=&\displaystyle
\bigcup_{[e_1,\dots,e_m]\in E,\atop
[f_1,\dots,f_n]\in F}L_1^{e_1}\cdots L_{i-1}^{e_{i-1}}L_i^{e_if_1}\cdots L_{i+n-1}^{e_if_n}L_{i+n}^{e_{i+1}}\cdots L_{m+n-1}^{e_n}\\
&=& \displaystyle\bigcup_{[g_1,\dots,g_{n+m-1}]\in E\circ_i F}L_1^{g_1}\cdots L_{n+m-1}^{g_{n+m-1}}\\
&=& \displaystyle E\circ_i F(L_1,\dots,L_{n+m-1}).\end{array}\]
Hence, $2^{\Sigma^*}$ is a $\mathcal B$-module. Note a finite union of catenation of regular languages is regular, so each operator $E\in\mathcal B$ maps $\mathrm{Reg}(\Sigma^*)$  on $\mathrm{Reg}(\Sigma^*)$. This implies that it is also a $\mathcal B$-module.
\end{proof}
\subsection{Action of $\mathcal T$ on languages}

We define a map $V: \mathcal T_k\rightarrow \mathcal B_k$ by
\[
 V(T)=\{v(S):S\in {\mathcal F(T)}\}
\] 
with
\[
v(S)=(v_1,\dots,v_k)\mbox{ where }v_j=\left\{\begin{array}{ll}0&\mbox{ if }j\in\bigcup_{(x,y)\in S}\llbracket x,y\rrbracket\\
1&\mbox{ otherwise}.\end{array}\right.
\]
\begin{example}
\rm Consider $T=\{(1,2),(2,3),(3,4),(4,4)\}\in\mathcal T_4$. The images of the elements of $\mathcal F(T)$ are 
\[\begin{array}{rcl}
v(\{(1,2),(3,4)\})&=&(0,0,0,0)\\
v(\{(1,2),(4,4)\})&=&(0,0,1,0)\\
v(\{(2,3),(4,4)\})&=&(1,0,0,0)\\
v(\{(1,2)\})&=&(0,0,1,1)\\
v(\{(2,3)\})&=&(1,0,0,1)\\
v(\{(3,4)\})&=&(1,1,0,0)\\
v(\{(4,4)\})&=&(1,1,1,0)\\
v(\emptyset)&=&(1,1,1,1)
\end{array}
\]
Hence 
$$V(T)=\{(0,0,0,0),(0,0,1,0),(1,0,0,0),(0,0,1,1),(1,0,0,1),(1,1,0,0),(1,1,1,0),(1,1,1,1)\}$$
\end{example}
\begin{theorem}
 The map $V$ is a morphism of operads.
\end{theorem}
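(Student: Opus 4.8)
The plan is to verify the two defining properties of an operad morphism: that $V$ carries the unit of $\mathcal T$ to the unit of $\mathcal B$, and that it is compatible with every partial composition, i.e. $V(T_1\circ_k T_2)=V(T_1)\circ_k V(T_2)$ for $T_1\in\mathcal T_m$, $T_2\in\mathcal T_n$ and $1\le k\le m$. Since the full composition $\circ$ is assembled from the $\circ_k$, compatibility with the partial compositions suffices, and this equality of subsets of $\mathbb B^{m+n-1}$ is the heart of the matter. The unit of $\mathcal T$ is $\emptyset\in\mathcal T_1$ (immediate from $T_1\circ_k\emptyset=\Shift{k,1}{T_1}=T_1$ and $\emptyset\circ_1 T_2=\Shift{1,n}{\emptyset}\cup\Dec{0}{T_2}=T_2$); its only free subset is $\emptyset$, whence $V(\emptyset)=\{(1)\}=\{[1]\}=\mathbf 1$, so the unit is preserved.

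For the composition, recall $T_1\circ_k T_2=\Shift{k,n}{T_1}\cup\Dec{k-1}{T_2}$, so every subset of $T_1\circ_k T_2$ is uniquely $\Shift{k,n}{A_1}\cup\Dec{k-1}{A_2}$ with $A_1\subseteq T_1$, $A_2\subseteq T_2$. First I would record that both $\Shift{k,n}{\cdot}$ and $\Dec{k-1}{\cdot}$ preserve freeness: by the closed formula of the lemma both maps are monotone relabellings, so disjoint couples stay disjoint and conversely, and thus $\Shift{k,n}{A_1}$ (resp. $\Dec{k-1}{A_2}$) is free iff $A_1\in\mathcal F(T_1)$ (resp. $A_2\in\mathcal F(T_2)$). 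The only cross-interaction sits in the inserted block $\Intervalle{k}{k+n-1}$, where the couples of $\Dec{k-1}{A_2}$ live: a couple $(x,y)$ of $A_1$ with $y<k$ stays left of the block and one with $x>k$ is pushed right of it (into $\Intervalle{k+n}{m+n-1}$), but a \emph{straddling} couple $x\le k\le y$ becomes $(x,y+n-1)$ and hence covers the whole block. Consequently $\Shift{k,n}{A_1}\cup\Dec{k-1}{A_2}$ is free exactly when $A_1\in\mathcal F(T_1)$, $A_2\in\mathcal F(T_2)$ and either $A_2=\emptyset$ or $A_1$ has no straddling couple.

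Next I would compute $v(S)$ coordinatewise for such a free $S$ and match it with $v(A_1)\circ_k v(A_2)$. Writing $e=v(A_1)$ and $f=v(A_2)$: the monotone shift identifies the coverage of positions $<k$ with that of $A_1$, yielding the entries $e_1,\dots,e_{k-1}$, and the coverage of positions in $\Intervalle{k+n}{m+n-1}$ with that of $A_1$ on $\Intervalle{k+1}{m}$, yielding $e_{k+1},\dots,e_m$. For a block position $k+\ell-1$ ($1\le\ell\le n$), it is covered in $S$ iff $\ell$ is covered in $A_2$ or $A_1$ has a straddling couple; but $A_1$ has a straddling couple iff position $k$ is covered in $A_1$, i.e. iff $e_k=0$. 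Hence the block entry equals $e_k f_\ell$ in all cases, so $v(S)=(e_1,\dots,e_{k-1},e_kf_1,\dots,e_kf_n,e_{k+1},\dots,e_m)=e\circ_k f$, which proves $V(T_1\circ_k T_2)\subseteq V(T_1)\circ_k V(T_2)$.

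For the reverse inclusion, given arbitrary $e=v(A_1)\in V(T_1)$ and $f=v(A_2)\in V(T_2)$ I would realise $e\circ_k f$ as some $v(S)$ with $S\in\mathcal F(T_1\circ_k T_2)$. If $e_k=1$ then $A_1$ has no straddling couple, so the pair $(A_1,A_2)$ already yields a free $S$ with $v(S)=e\circ_k f$ by the previous computation. If $e_k=0$ then $e_k f_\ell=0$ for every $\ell$, so $e\circ_k f$ no longer depends on $f$; it is produced by the pair $(A_1,\emptyset)$, whose union $\Shift{k,n}{A_1}$ is free and whose vector, by the block analysis with $A_2=\emptyset$, is exactly $(e_1,\dots,e_{k-1},0,\dots,0,e_{k+1},\dots,e_m)=e\circ_k f$. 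The main obstacle is precisely this block analysis: one must see that a couple of $T_1$ straddling the grafting point annihilates the whole inserted segment, and that this geometric fact is encoded algebraically by the factor $e_k$ multiplying every $f_\ell$ in the boolean composition — the collapse to $A_2=\emptyset$ when $e_k=0$ being what rescues surjectivity onto $V(T_1)\circ_k V(T_2)$.
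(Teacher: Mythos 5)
Your proof is correct and takes essentially the same route as the paper's: both rest on characterizing the free subsets of $T_1\circ_k T_2$ as unions $\Shift{k,n}{A_1}\cup\Dec{k-1}{A_2}$ with $A_1\in\mathcal F(T_1)$, $A_2\in\mathcal F(T_2)$ and no couple of $A_1$ straddling $k$ unless $A_2=\emptyset$, together with the observation that a straddling couple is equivalent to $e_k=0$ and blanks the whole inserted block --- exactly the factor $e_k$ in the boolean composition (the paper packages this as the splitting into $V_1$, $V_2^0$, $V_2^1$ with $V_2^1\subset V_1$). One harmless inaccuracy: the decomposition of a subset of $T_1\circ_k T_2$ as $\Shift{k,n}{A_1}\cup\Dec{k-1}{A_2}$ need not be unique (for instance $(k,k)\in T_1$ and $(1,n)\in T_2$ have the same image $(k,k+n-1)$), but your argument only ever uses the existence of such a decomposition.
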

\begin{proof}
First the image of $\emptyset\in\mathcal T_1$ by $V$ is the set $\bf 1$.\\
Now let us examine the image of a composition. Let $T_1\in \mathcal T_m$, $T_2\in\mathcal T_n$ and $1\leq k\leq m$.
\\ We need the following lemma which explains how the composition modifies the free subsets.
\begin{lemma}
\begin{equation}\label{FT1T2}\begin{array}{rcl}
\mathcal F(T_1\circ_k T_2)&=&\{\Shiftnk{n,k}S:S\in\mathcal F(T_1)\}\cup\\
&&
\{
\Shiftnk{n,k}S\cup\Decnk{k-1}T:S\in\mathcal F(T_1), T\in\mathcal F(T_2),\\&& \forall(x,y)\in S, x\geq k\mbox{ or }y<k-1\}
\end{array}
\end{equation}
\end{lemma}
\begin{proof}
For simplicity, we will denote by $\mathcal R$ the right hand side of equality (\ref{FT1T2}).
Note $\mathcal R\subseteq \mathcal F(T_1\circ_k T_2)$ is obtained remarking that each element of $\mathcal R$ is free.\\
Now let us prove the inclusion $\mathcal F(T_1\circ_k T_2)\subseteq \mathcal R$. Straightforwardly from the definition of $\mathcal F$, we have the inclusion
\[
\mathcal F(T_1\circ_k T_2)\subset\{ \Shiftnk{n,k}S\cup\Decnk{k-1}T:S\in\mathcal F(T_1), T\in\mathcal F(T_2)\}.
\]
Hence, if $R\in \mathcal F(T_1\circ_k T_2)$, there exist $S\in\mathcal F(T_1)$ and $T\in\mathcal F(T_2)$ such that $R=\Shiftnk{n,k}S\cup\Decnk{k-1}T$. If $T=\emptyset$ then $R\in \{\Shiftnk{n,k}S:S\in\mathcal F(T_1)\}\subset \mathcal R$. Suppose $T\neq \emptyset$ and $R\not\in \mathcal R$. Consequently there exist two couples $(x,y+n),(x'+k-1,y'+k-1)\in R$ with $(x,y)\in S$, $x\leq k$, $y>k-1$ and $(x',y')\in T$. Hence,
$\llbracket x,y+n\rrbracket\cap \llbracket x'+k-1,y'+k-1\rrbracket=\llbracket x'+k-1,y'+k-1\rrbracket\neq\emptyset$. This contradicts the fact that $R$ is free, proves the second inclusion and then the equality of the two sets.
\end{proof}
\noindent{\it End of the proof}
From equation (\ref{FT1T2}) we have
\[\begin{array}{rcl}
V(T_1\circ_k T_2)&=&\{v(\Shiftnk{n,k}S):S\in\mathcal F(T_1)\}\cup\\
&&
\{
v(\Shiftnk{n,k}S\cup\Decnk{k-1}T):S\in\mathcal F(T_1), T\in\mathcal F(T_2),\\&& \forall(x,y)\in S, x\geq k\mbox{ or }y<k-1\}
\end{array}
\]
Set $V_1:=\{
v(\Shiftnk{n,k}S\cup\Decnk{k-1}T):S\in\mathcal F(T_1), T\in\mathcal F(T_2), \forall(x,y)\in S, x\geq k\mbox{ or }y<k-1\}$ and $V_2=\{v(\Shiftnk{n,k}S):S\in\mathcal F(T_1)\}$.
We have
\[
V_1=\{
v(\Shiftnk{n,k}S\cup\Decnk{k-1}T):S\in\mathcal F(T_1), T\in\mathcal F(T_2), v(s)_k=1\}.
\]
Hence,
\[\begin{array}{rcl}
V_1&=&\{(s_1,\dots,s_{k-1},s_kt_1,\dots,s_kt_n,s_{k+1},\dots,s_m):(s_1,\dots,s_m)\in V(T_1),\\&& (t_1,\dots,t_n)\in V(T_2), s_k=1\}\end{array}
\]
On another hand $V_2$ splits as
\[
V_2=V^0_2\cup V^1_2
\]
where
\[
V^i_2=\{v(\Shiftnk{n,k}S:S\in\mathcal F(T_1)\mbox{ and } v(S)_k=i\}.
\]
We have $V^1_2\subset V_1$ (it suffices to put $T=\emptyset$ in $V_1$) and
\[\begin{array}{rcl}
V^0_2&=&\{(s_1,\dots,s_{k-1},s_kt_1,\dots,s_kt_n,s_{k+1},\dots,s_m):(s_1,\dots,s_m)\in V(T_1),\\&& (t_1,\dots,t_n)\in V(T_2), s_k=0\}\end{array}.
\]
It follows that $V(T_1\circ_k T_2)=V(T_1)\circ_kV(T_2)$.
\end{proof}
\begin{corollary}\label{cormod}
The sets $2^{\Sigma^*}$ and ${\rm Reg}(\Sigma^*)$ are both ${\mathcal T}$-module.
\end{corollary}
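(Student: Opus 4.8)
The plan is to obtain both module structures as a formal consequence of the two results just established: that $V:\mathcal{T}\to\mathcal{B}$ is a morphism of operads, and that $2^{\Sigma^*}$ and ${\rm Reg}(\Sigma^*)$ are $\mathcal{B}$-modules. The guiding principle is that a module over an operad pulls back along an operad morphism: whenever $V:\mathcal{P}\to\mathcal{Q}$ is a morphism of operads and ${\bf S}$ is a $\mathcal{Q}$-module, one equips ${\bf S}$ with a $\mathcal{P}$-action by setting ${\bf p}(s_1,\dots,s_n):=V({\bf p})(s_1,\dots,s_n)$ for ${\bf p}\in\mathcal{P}(n)$. Here the relevant $\mathcal{T}$-action is exactly the multitilde action of Section~\ref{multitilde}: unwinding the definitions of $v(S)$ and of $\mathcal{W}_S$ shows that $v(S)$ carries a $0$ in position $j$ precisely when $L_j$ is replaced by $\{\varepsilon\}$ in $\mathcal{W}_S$, whence $L_1^{v_1}\cdots L_n^{v_n}=\mathcal{W}_S(L_{1,n})$ and therefore $V(T)(L_{1,n})=\Pp{T}{L_{1,n}}$. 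So defining the action of $\mathcal{T}$ through $V$ recovers the language $\Pp{T}{L_{1,n}}$, and the corollary really concerns the multitilde operators.

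The verification of the module axiom is then a three-step rewriting. I would fix $T_1\in\mathcal{T}_m$, $T_2\in\mathcal{T}_n$ and $1\leq i\leq m$, expand the $\mathcal{T}$-action on both the inner and outer operators into the $\mathcal{B}$-action of $V(T_1)$ and $V(T_2)$, invoke that $2^{\Sigma^*}$ is a $\mathcal{B}$-module to collapse the nested $\mathcal{B}$-action into $(V(T_1)\circ_i V(T_2))(L_{1,m+n-1})$, and finally use that $V$ preserves partial composition to replace $V(T_1)\circ_i V(T_2)$ by $V(T_1\circ_i T_2)$; the last expression is by definition $(T_1\circ_i T_2)(L_{1,m+n-1})$. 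One also checks that the identity $\emptyset\in\mathcal{T}_1$ acts trivially, since $V(\emptyset)={\bf 1}$ acts as the identity on languages. This gives the $\mathcal{T}$-module structure on $2^{\Sigma^*}$.

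For ${\rm Reg}(\Sigma^*)$ the same chain of equalities proves compatibility, and stability under the action is inherited directly from the $\mathcal{B}$-module: since the $\mathcal{T}$-action of $T$ is the $\mathcal{B}$-action of $V(T)$, and the preceding proposition shows that every operator of $\mathcal{B}$ sends tuples of regular languages to regular languages (a finite union of catenations of regular languages being regular), the set ${\rm Reg}(\Sigma^*)$ is closed under the $\mathcal{T}$-action as well.

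I do not expect a genuine obstacle: the whole argument is the statement that modules restrict along operad morphisms, combined with the two facts already proven. The only point deserving care is the bookkeeping identifying the pulled-back action $V(T)(L_{1,n})$ with the original multitilde language $\Pp{T}{L_{1,n}}$, so that the corollary is about the intended operators and not merely an abstract induced action.
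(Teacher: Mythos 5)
Your proposal is correct and is exactly the paper's argument: the paper's entire proof is ``It suffices to set $T(L_1,\dots,L_k):=V(T)(L_1,\dots,L_k)$,'' i.e.\ pull the $\mathcal B$-module structure back along the operad morphism $V$. Your extra bookkeeping identifying $V(T)(L_{1,n})$ with $\Pp{T}{L_{1,n}}$ is the content of the proposition the paper states immediately after the corollary, so nothing is missing.
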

\begin{proof}
It suffices to set $T(L_1,\dots,L_k):=V(T)(L_1,\dots,L_k)$.
\end{proof}

Note the action of $\mathcal T$ on languages matches with the definition of $\Ppn{T}$.
\begin{proposition}
\[
\Pp{T}{L_1,\dots,L_k}=T(L_1,\dots,L_k)
\]
\end{proposition}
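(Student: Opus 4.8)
The plan is to unfold both sides of the claimed equality into unions of concatenation products and then match them factor by factor, the whole argument reducing to the observation that the boolean coordinate $v(S)_j$ is precisely the exponent that turns $L_j$ into the factor used by $\mathcal{W}_S$. Concretely, by the definition of $v$ we have $v(S)_j=0$ exactly when $j\in\bigcup_{(x,y)\in S}\Intervalle{x}{y}$, and since $L_j^{0}=\{\varepsilon\}$ while $L_j^{1}=L_j$, the factor $L_j^{v(S)_j}$ coincides with the factor $L'_j$ appearing in the definition of $\mathcal{W}_S$.

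First I would fix $S\in\mathcal{F}(T)$ and compare the two ways of reading the concatenation it produces. Unwinding $\mathcal{W}_S(L_{1,n})=L'_1\cdots L'_n$ and replacing each $L'_j$ by $L_j^{v(S)_j}$ according to the previous remark gives
\[
\mathcal{W}_S(L_{1,n})=L_1^{v(S)_1}\cdots L_n^{v(S)_n}.
\]
Taking the union over all free subsets then yields, by the definition of $\Ppn{T}$,
\[
\Pp{T}{L_{1,n}}=\bigcup_{S\in\mathcal{F}(T)}L_1^{v(S)_1}\cdots L_n^{v(S)_n}.
\]

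Next I would expand the right-hand side. By Corollary~\ref{cormod} and the definition of the action of $\mathcal{B}$ on languages,
\[
T(L_{1,n})=V(T)(L_{1,n})=\bigcup_{[e_1,\dots,e_n]\in V(T)}L_1^{e_1}\cdots L_n^{e_n},
\]
and since $V(T)=\{v(S):S\in\mathcal{F}(T)\}$, the index set $\{[e_1,\dots,e_n]\in V(T)\}$ is exactly the image of $\mathcal{F}(T)$ under $v$. Substituting reproduces the same union of products as above, which closes the argument.

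The only point deserving attention — and it is what I expect to be the sole mild obstacle — is that the map $S\mapsto v(S)$ need not be injective, so a given vector in $V(T)$ may arise from several free subsets. This is harmless: both sides are unions, and union is idempotent, so collapsing the index set from $\mathcal{F}(T)$ to its image $V(T)$ does not change the resulting language. Beyond this bookkeeping the statement is a direct definition chase, the content being entirely carried by the matching $L_j^{v(S)_j}=L'_j$ between the $0/1$ convention of $v$ and the empty-word/identity convention of $\mathcal{W}_S$.
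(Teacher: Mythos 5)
Your proposal is correct and follows essentially the same route as the paper: rewrite $\mathcal{W}_S(L_{1,n})$ as $L_1^{v(S)_1}\cdots L_n^{v(S)_n}$ and then reindex the union over $\mathcal{F}(T)$ by its image $V(T)$. The remark about non-injectivity of $v$ being harmless by idempotency of union is a small point the paper passes over silently, but otherwise the two arguments coincide.
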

\begin{proof}
We have
\[
\mathcal W_S(L_1,\dots,L_k)=L'_1\cdots L'_k
\]
with 
\[
L'_i=\left\{\begin{array}{ll}
\{\epsilon\}&\mbox{ if }i\in\bigcup_{(x,y)\in S}\llbracket x,y\rrbracket\\
L_k&\mbox{ otherwise}.
\end{array}\right.
\]
Hence,
\[
\mathcal W_S(L_1,\dots,L_k)=L_1^{v(S)_1}\cdots L^{v(S)_k}_k.
\]
And the result follows from
\[
\Pp{T}{L_1,\dots,L_k}=\bigcup_{S\in\mathcal F(T)}\mathcal W_S(L_1,\dots,L_k)
=\bigcup_{v\in V(T)}L_1^{v_1}\cdots L_k^{v_k}=T(L_1,\dots,L_k).
\]
\end{proof}

\section{$\mathcal{T}$, $\mathrm{RAS}$ and $\mathrm{POSet}$\label{SecTRAPO}}
 
In this section, we give a combinatorial description of the operad $\mathcal{T}$. In particular we prove that it is isomorphic to an operad whose underlying set is the set $\mathrm{RAS}^\leq$ of reflexive and (necessarily) antisymmetric subrelations of the order $\leq$ on $\N\setminus \{0\}$.
In our context, we define a quotient $\mathrm{POSet}^\leq$ of $\mathrm{RAS}^\leq$ whose elements are indexed by partial ordered sets. This construction is based on the transitive closure  equivalence relation on $\rm RAS$; the equivalent classes are indexed by POSets. We prove that the operad $\mathrm{POSet}^\leq$ is isomorphic to a quotient of $\mathcal{T}$ which is   compatible in a natural way with the action on languages.

\subsection{From $\mathcal{T}$ to $\mathrm{RAS}^\leq$}

  Since multitildes are defined by a set of couples $(x,y)$ such that $x\leq y$, they can be seen as antisymmetric relations  compatible with the natural order on $\N$, \emph{i.e.} $(x,y)\in T$ implies $x\leq y$.\\
On the other hand some multitildes expressions are equivalent. For instance, $\widetilde{ab}\widetilde{cd}$ and $\widetilde{\widetilde{ab}\widetilde{cd}}$ denote the same language : $\{\epsilon, ab,cd,abcd\}$. This phenomenon is not very natural when stated in terms of relations. Nevertheless, up to a slight transformation, this can seen as a transitive closure.\\  
Let us be more precise and define a graduated bijection $\phi$ between $\mathcal{T}_k$ and $\mathrm{RAS}_{k}^\leq:=\mathrm{RAS}^\leq\cap \{1,\dots,k+1\}^2$ for any integer $k$:
  
%

  \centerline{
    $\phi:\left\{
	  \begin{array}{l@{\ }l@{\ }l}
	    \mathcal{T}_k & \longrightarrow & \mathrm{RAS}_{k}^\leq\\
	    T & \longrightarrow & \{(x,y+1)\mid (x,y)\in T\} \cup \{(x,x)\mid x\in\Intervalle{1}{n}\}
	  \end{array}
    \right.$
  }
\noindent The fact that $\phi$ is a bijection is obvious and the inverse bijection $\phi^{-1}$ is given by:\\
  \centerline{
    $\phi^{-1}:\left\{
	  \begin{array}{l@{\ }l@{\ }l}
	    \mathrm{RAS}_{k}^\leq & \longrightarrow & \mathcal{T}_k\\
	    R & \longrightarrow & \{(x,y)\mid (x,y+1)\in R\wedge x\neq y\}
	  \end{array}
    \right.$
  }
  
    
  
 
We endow the set ${\rm RAS}^\leq$ with a structure of operad, setting:
\[
R_1\circ'_i R_2=\phi(\phi^{-1}(R_1)\circ_i\phi^{-1}(R_2)).
\]

  Moreover, the bijection $\phi$ is in fact an isomorphism from the operad $(\mathcal{T},\circ)$ to $(\mathrm{RAS}^\leq,\circ')$.
  
We define the operator $\ShiftDiamnk{n,k}$ acting on pairs of integers $(x,y)$ by
\[
\ShiftDiam{n,k}{x,y}=\left\{
\begin{array}{ll}
(x,y)&\mbox{ if } y\leq k\\
(x,y+n-1)&\mbox{ if } x\leq k<y\\
(x+n-1,y+n-1)&\mbox{ otherwise}.
\end{array}
\right.
\]
We extend the definition to the set of pairs by $\ShiftDiam{n,k}S=\{\ShiftDiam{n,k}{x,y}:(x,y)\in S\}$.
Note $\ShiftDiamnk{n,k}$ acts by relabelling $x\rightarrow x+n-1$ when $x>k$. 
Hence, we set as for the multitildes:
\[
R_1\Diamond_k R_2:=\ShiftDiam{n,k}{R_1}\cup \Dec {k-1}{R_2},
\]
for $R_1\in \mathrm{RAS}^{\leq}_m$, $R_2\in \mathrm{RAS}^{\leq}_n$ and $k\leq m$.\\
The following lemma shows that $\mathrm{RAS}^\leq$ is closed under the action of the compositions $\Diamond_k$ and that these binary operators are compatible with the graduation.
\begin{lemma}
Let $R_1\in \mathrm{RAS}^\leq_m$ and  $R_2\in \mathrm{RAS}^\leq_n$ then $R_1\Diamond_k R_2\in \mathrm{RAS}^\leq_{m+n-1}$.
\end{lemma}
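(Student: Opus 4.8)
The plan is to verify the three conditions defining membership in $\mathrm{RAS}^\leq_{m+n-1}$ directly on the union $R_1\Diamond_k R_2=\ShiftDiam{n,k}{R_1}\cup\Dec{k-1}{R_2}$: namely that every pair $(x,y)$ it contains satisfies $x\leq y$ (so that it is a subrelation of $\leq$), that all of its entries lie in $\Intervalle{1}{m+n}$ (the index set attached to $\mathrm{RAS}^\leq_{m+n-1}$), and that it is reflexive, i.e. contains $(j,j)$ for every $j\in\Intervalle{1}{m+n}$. Antisymmetry need not be checked separately, since any subrelation of $\leq$ is automatically antisymmetric; this is exactly the parenthetical remark in the definition of $\mathrm{RAS}^\leq$.

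For the first two conditions I would run a short case analysis on the three branches of $\ShiftDiamnk{n,k}$. A pair $(x,y)$ of $R_1$ satisfies $x\leq y$ with $x,y\in\Intervalle{1}{m+1}$; in the three branches its image is $(x,y)$, $(x,y+n-1)$ or $\Dec{n-1}{x,y}$, and since $n\geq 1$ each of these still satisfies the order relation and has both entries bounded above by $y+n-1\leq m+n$. Likewise every pair of $R_2$ lies in $\Intervalle{1}{n+1}$, and its image $\Dec{k-1}{x,y}$ has entries between $k$ and $n+k\leq m+n$ (using $k\leq m$), again preserving $x\leq y$. This settles the order and range conditions.

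The delicate point is \emph{reflexivity}, where the images of the two diagonals must interlock exactly. On a diagonal pair $(x,x)$ the middle branch $x\leq k<y$ of $\ShiftDiamnk{n,k}$ is vacuous, so $\ShiftDiamnk{n,k}$ fixes $(x,x)$ when $x\leq k$ and sends it to $(x+n-1,x+n-1)$ when $x>k$; hence the diagonal of $R_1$ contributes exactly the indices $\Intervalle{1}{k}\cup\Intervalle{k+n}{m+n}$. The map $\Dec{k-1}{}$ sends the diagonal of $R_2$, indexed by $\Intervalle{1}{n+1}$, to the indices $\Intervalle{k}{k+n}$, which is precisely the block spanning the gap between the two pieces above. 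Taking the union recovers all of $\Intervalle{1}{m+n}$, establishing reflexivity and completing the proof. The only thing to get right here is the arithmetic of these three ranges together with the hypothesis $k\leq m$, which is what guarantees that they abut without leaving gaps.
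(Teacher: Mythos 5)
Your proof is correct and follows essentially the same route as the paper's: the paper likewise dismisses antisymmetry as immediate and concentrates on reflexivity, computing that the diagonal of $R_1$ contributes the indices $\Intervalle{1}{k}\cup\Intervalle{n+k}{n+m}$ and the shifted diagonal of $R_2$ contributes $\Intervalle{k}{n+k}$, whose union is all of $\Intervalle{1}{m+n}$. Your additional explicit verification of the order and range conditions is a minor elaboration of what the paper calls ``straightforward,'' not a different argument.
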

\begin{proof}
The antisymmetry is straightforward from the definition of $\Diamond_k$. So it suffices to prove the reflexivity, that is to check $(x,x)\in R_1\Diamond_k R_2$ for each $1\leq x\leq m+n$. But, from the definitions, $(x,x)\in \ShiftDiam{n,k}{R_1}$ when $1\leq x\leq k$ or $n+k\leq x\leq n+m$ and $(x,x)\in \Dec{k-1}{R_2}$ is $k\leq x\leq n+k$. So for each $1\leq x\leq m+n$ we obtain $(x,x)\in  R_1\Diamond_k R_2$. Hence, $R_1\Diamond_k R_2\in \mathrm{RAS}^\leq_{m+n-1}$.
\end{proof}

From the partial composition $\Diamond_k$ we define 
\[
\Diamond:\left\{ \begin{array}{ccc} \mathrm{RAS}^\leq_m\otimes\mathrm{RAS}^\leq_{n_1}\otimes\cdots\otimes
\mathrm{RAS}^\leq_{n_m}&\longrightarrow& \mathrm{RAS}^\leq_{n_1+\cdots+n_m}\\
R\Diamond(R_1,\dots,R_m)&=& (\dots((R\Diamond_m R_1)\Diamond_{m-1} R_{m-1})\dots)\Diamond_{1} R_1. \end{array}\right.
\]
And more precisely:
  \begin{proposition}
   We have $\circ'_k=\Diamond_k$ and $\circ'=\Diamond$.
  \end{proposition}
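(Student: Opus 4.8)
The plan is to read the identity $\circ'_k=\Diamond_k$ through the isomorphism $\phi$. Since by definition $R_1\circ'_kR_2=\phi(\phi^{-1}(R_1)\circ_k\phi^{-1}(R_2))$, writing $T_1=\phi^{-1}(R_1)\in\mathcal T_m$ and $T_2=\phi^{-1}(R_2)\in\mathcal T_n$ turns the first equality into the statement that $\phi$ intertwines the two compositions, namely $\phi(T_1\circ_kT_2)=\phi(T_1)\Diamond_k\phi(T_2)$. I would split $\phi$ into its two pieces: the \emph{raising} part $\widehat T:=\{(x,y+1)\mid(x,y)\in T\}$, which produces only strict pairs $x<y+1$, and the reflexive diagonal $\Delta_p:=\{(x,x)\mid 1\le x\le p\}$, so that $\phi(T)=\widehat T\cup\Delta_{k+1}$ for $T\in\mathcal T_k$.

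First I would establish two \emph{intertwining relations} for the raising operation by a case analysis on a single pair $(x,y)$ with $x\le y$: if $(a,b)=\Shift{k,n}{x,y}$ then $(a,b+1)=\ShiftDiam{n,k}{x,y+1}$, and likewise $\Dec{k-1}{x,y}$ raised in its second coordinate equals $\Dec{k-1}{x,y+1}$. The first is the crucial one and unwinds directly: the three clauses of $\Shiftnk{k,n}$ tested against $(x,y)$ correspond exactly to the three clauses of $\Diamondright^{n,k}$ tested against $(x,y+1)$, since the thresholds $y<k$, $x\le k\le y$, $x>k$ become $y+1\le k$, $x\le k<y+1$, $x>k$, and in each clause adding one to the second coordinate commutes with the relabelling. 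The same check shows that $\Diamondright^{n,k}$ and $\ggg^{k-1}$ carry strict pairs to strict pairs and reflexive pairs to reflexive pairs. Extending to sets then gives $\widehat{\Shift{k,n}{T}}=\ShiftDiam{n,k}{\widehat T}$ and $\widehat{\Dec{k-1}{T}}=\Dec{k-1}{\widehat T}$.

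Next I would expand both sides using $T_1\circ_kT_2=\Shift{k,n}{T_1}\cup\Dec{k-1}{T_2}$ and $R_1\Diamond_kR_2=\ShiftDiam{n,k}{R_1}\cup\Dec{k-1}{R_2}$. On the left, $\phi(T_1\circ_kT_2)=\widehat{\Shift{k,n}{T_1}}\cup\widehat{\Dec{k-1}{T_2}}\cup\Delta_{m+n}$, which by the intertwining relations equals $\ShiftDiam{n,k}{\widehat{T_1}}\cup\Dec{k-1}{\widehat{T_2}}\cup\Delta_{m+n}$. On the right, $\phi(T_1)\Diamond_k\phi(T_2)=\ShiftDiam{n,k}{\widehat{T_1}}\cup\ShiftDiam{n,k}{\Delta_{m+1}}\cup\Dec{k-1}{\widehat{T_2}}\cup\Dec{k-1}{\Delta_{n+1}}$. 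The strict parts already coincide, and since strict pairs and reflexive pairs never collide, the two sides agree as soon as the diagonal identity $\Delta_{m+n}=\ShiftDiam{n,k}{\Delta_{m+1}}\cup\Dec{k-1}{\Delta_{n+1}}$ holds. This is a plain index count, namely $\{1,\dots,k\}\cup\{k,\dots,n+k\}\cup\{k+n,\dots,m+n\}=\{1,\dots,m+n\}$, which is precisely the reflexivity bookkeeping already carried out in the closure lemma for $\Diamond_k$. This yields $\circ'_k=\Diamond_k$, and then $\circ'=\Diamond$ is immediate because both full compositions are the \emph{same} iterated expression $(\cdots(\,\cdot\,\circ_m\,\cdot)\cdots)\circ_1\,\cdot$ in their now-identical partial compositions.

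I expect the main obstacle to be not any single computation but the disciplined separation of the reflexive diagonal from the strict pairs. One must verify that $\Diamondright^{n,k}$ and $\ggg^{k-1}$ respect the strict/reflexive dichotomy, that the diagonal $\Delta_{m+n}$ created by applying $\phi$ after the $\mathcal T$-composition matches the union of the two shifted diagonals produced on the $\mathrm{RAS}^\leq$-side, and that the single coordinate shift built into $\phi$ is exactly what converts $\Shiftnk{k,n}$ into $\Diamondright^{n,k}$; everything else is routine relabelling.
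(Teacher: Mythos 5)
Your proposal is correct and follows essentially the same route as the paper, which simply asserts the intertwining relations $\phi\,\Shiftnk{k,n}=\ShiftDiamnk{n,k}\,\phi$ and $\phi\,\Decnk{k}=\Decnk{k}\,\phi$ and concludes. Your version merely fills in the details the paper leaves implicit — the three-case check that raising the second coordinate converts $\Shiftnk{k,n}$ into $\Diamondright^{n,k}$, and the separate verification that the reflexive diagonals match, which is indeed the same index count as in the closure lemma for $\Diamond_k$.
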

  \begin{proof}
   We have $\phi \Shiftnk{n,k}=\ShiftDiamnk{n,k}\phi$ and $\phi\Decnk{k}=\Decnk{k}\phi$. So $$\phi(T_1\circ_k T_2)=\phi(T_1)\Diamond_k\phi(T_2),$$
for each $T_1\in\mathcal{T}_m$, $T_2\in\mathcal{T}_n$ and $1\leq k\leq m$.
Since $\phi: (\mathcal{T},\circ)\rightarrow (\mathrm{RAS}^\leq,\circ')$ is an isomorphism of operads, we prove the result.
  \end{proof}  
It follows that:
\begin{theorem}
The structure $(\mathrm{RAS}^\leq,\Diamond)$ defines an operad isomorphic to $(\mathcal{T},\circ)$.
\end{theorem}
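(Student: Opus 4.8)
The plan is to recognize that the theorem is, in essence, a corollary of the constructions that precede it, so the proof is a short assembly of two facts rather than a fresh computation. The decisive observation is that the composition $\circ'$ on $\mathrm{RAS}^\leq$ was introduced by transport of structure along the graded bijection $\phi$, namely $R_1\circ'_i R_2=\phi(\phi^{-1}(R_1)\circ_i\phi^{-1}(R_2))$. Because $\phi$ is a bijection in each arity, this transport automatically produces an operad: the partial-composition axioms (\ref{ass1}) and (\ref{ass2}) together with the identity rule for $\circ'$ are equivalent, under the substitution $R_j=\phi(T_j)$, to the corresponding axioms for $\circ$, which hold by Theorem \ref{TildeOperad}; moreover $\phi$ is an isomorphism of operads by the very shape of the definition. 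Thus $(\mathrm{RAS}^\leq,\circ')$ is an operad isomorphic to $(\mathcal{T},\circ)$, which is exactly the content asserted just before the definition of $\Diamond$.

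First I would record this transport argument cleanly, noting also that the closure Lemma guarantees $\Diamond_k$ is a well-defined map $\mathrm{RAS}^\leq_m\otimes\mathrm{RAS}^\leq_n\to\mathrm{RAS}^\leq_{m+n-1}$, so that $\Diamond$ is a genuine composition on the graded set $\mathrm{RAS}^\leq$. Second, I would invoke the preceding Proposition, which identifies the transported composition with the explicit one: $\circ'_k=\Diamond_k$, and hence $\circ'=\Diamond$ for the full compositions built from them in the same way. That Proposition is where the only real computation lives: it compares the formula $T_1\circ_k T_2=\Shift{k,n}{T_1}\cup\Dec{k-1}{T_2}$ with $R_1\Diamond_k R_2=\ShiftDiam{n,k}{R_1}\cup\Dec{k-1}{R_2}$ by means of the two intertwining relations $\phi\,\Shiftnk{n,k}=\ShiftDiamnk{n,k}\,\phi$ and $\phi\,\Decnk{k}=\Decnk{k}\,\phi$, i.e.\ the d\'ecalage commutes with $\phi$ on the nose while the shift operator is conjugated by $\phi$ into $\ShiftDiamnk{n,k}$ (reflecting the shift of the diagonal entries $(x,x)$ introduced by $\phi$).

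Putting the two steps together gives the theorem directly: since $\circ'=\Diamond$, the pairs $(\mathrm{RAS}^\leq,\Diamond)$ and $(\mathrm{RAS}^\leq,\circ')$ are literally the same structure, which is an operad, and $\phi:(\mathcal{T},\circ)\to(\mathrm{RAS}^\leq,\Diamond)$ is the exhibited isomorphism. I do not anticipate a genuine obstacle: the associativity and identity axioms are inherited through $\phi$ rather than verified by hand on $\mathrm{RAS}^\leq$, and the one nontrivial ingredient -- the verification of the two intertwining relations underlying $\circ'=\Diamond$ -- has already been carried out. The mild subtlety worth double-checking is that the full composition $\Diamond$ is assembled from the partial $\Diamond_k$ by exactly the same nested formula as $\circ$ is from the $\circ_k$, so that $\circ'_k=\Diamond_k$ really does upgrade to $\circ'=\Diamond$.
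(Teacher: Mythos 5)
Your proposal is correct and takes essentially the same route as the paper: the theorem is stated there as an immediate consequence (``It follows that'') of the preceding proposition $\circ'_k=\Diamond_k$, the operad structure on $\mathrm{RAS}^\leq$ having been obtained by transport along the bijection $\phi$, so that the only genuine computation is the pair of intertwining relations $\phi\,\Shiftnk{n,k}=\ShiftDiamnk{n,k}\,\phi$ and $\phi\,\Decnk{k}=\Decnk{k}\,\phi$, exactly as you identify.
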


\subsection{From $\mathrm{RAS}^\leq$ to $\mathrm{POSet}^\leq$}

  Any reflexive and antisymmetric relation $R$ can be turned into a partial order when applying the transitive closure denoted by $\gamma(R)$.\\
Note, since $\ShiftDiamnk{n,k}$ and $\Decnk{k}$ are just relabelling, they commute with the operator $\gamma$:
\begin{equation}\label{CommGamma}
\gamma\ShiftDiamnk{n,k}=\ShiftDiamnk{n,k}\gamma,\, \Decnk{k}\gamma=\gamma\Decnk{k}.
\end{equation}

  

We need the following lemma
\begin{lemma}
\[
\gamma(\gamma R_1\Diamond_k \gamma R_2)=\gamma(R_1\Diamond_k R_2)
\]
\end{lemma}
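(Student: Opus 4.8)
The plan is to establish the two inclusions separately, relying only on three elementary properties of the transitive closure $\gamma$: it is monotone (if $A\subseteq B$ then $\gamma A\subseteq\gamma B$), it is idempotent ($\gamma\gamma=\gamma$), and, by (\ref{CommGamma}), it commutes with the relabelling operators $\ShiftDiamnk{n,k}$ and $\Decnk{k-1}$. No case analysis on the three branches of $\ShiftDiamnk{n,k}$ will be needed beyond the observation that it acts coordinatewise through a single monotone relabelling of integers.

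For the inclusion $\gamma(R_1\Diamond_k R_2)\subseteq\gamma(\gamma R_1\Diamond_k\gamma R_2)$, I would start from $R_1\subseteq\gamma R_1$ and $R_2\subseteq\gamma R_2$. Since the relabellings act coordinatewise and are therefore monotone, $\ShiftDiamnk{n,k}R_1\subseteq\ShiftDiamnk{n,k}\gamma R_1$ and $\Decnk{k-1}R_2\subseteq\Decnk{k-1}\gamma R_2$; taking the union gives $R_1\Diamond_k R_2\subseteq\gamma R_1\Diamond_k\gamma R_2$. Applying $\gamma$ and using its monotonicity yields this inclusion at once.

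The converse inclusion is where the commutation relation does the real work. Using (\ref{CommGamma}), I would rewrite
\[
\gamma R_1\Diamond_k\gamma R_2=\ShiftDiamnk{n,k}(\gamma R_1)\cup\Decnk{k-1}(\gamma R_2)=\gamma(\ShiftDiamnk{n,k}R_1)\cup\gamma(\Decnk{k-1}R_2).
\]
Now each of $\ShiftDiamnk{n,k}R_1$ and $\Decnk{k-1}R_2$ is a subset of $R_1\Diamond_k R_2$, so monotonicity gives $\gamma(\ShiftDiamnk{n,k}R_1)\subseteq\gamma(R_1\Diamond_k R_2)$ and likewise for the second term; their union is therefore contained in $\gamma(R_1\Diamond_k R_2)$. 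Hence $\gamma R_1\Diamond_k\gamma R_2\subseteq\gamma(R_1\Diamond_k R_2)$, and applying $\gamma$ together with idempotence gives $\gamma(\gamma R_1\Diamond_k\gamma R_2)\subseteq\gamma(\gamma(R_1\Diamond_k R_2))=\gamma(R_1\Diamond_k R_2)$, which closes the argument.

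The one subtlety to verify carefully is the commutation step, namely $\ShiftDiamnk{n,k}(\gamma R_1)=\gamma(\ShiftDiamnk{n,k}R_1)$ and the analogous identity for $\Decnk{k-1}$. This is valid precisely because both operators are induced by strictly monotone, hence injective, relabellings of the underlying integers, so that composable chains in the relabelled relation correspond bijectively to composable chains in the original one; this is exactly the content already recorded in (\ref{CommGamma}), so I would simply invoke it rather than reprove it. Everything else reduces to the monotonicity and idempotence of $\gamma$, making the proof essentially a two-line manipulation once the commutation is in hand.
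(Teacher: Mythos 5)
Your proof is correct and follows essentially the same route as the paper's: the paper also reduces the statement to showing $\gamma R_1\Diamond_k\gamma R_2\subseteq\gamma(R_1\Diamond_k R_2)$ via the commutation relation (\ref{CommGamma}) applied to $\ShiftDiamnk{n,k}$ and $\Decnk{k-1}$, then concludes by transitivity/idempotence, treating the reverse inclusion as obvious. Your write-up merely spells out the monotonicity steps that the paper leaves implicit.
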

\begin{proof}
It suffices to prove that $\gamma R_1\Diamond_k \gamma R_2\subset \gamma(R_1\Diamond_k R_2)$. Indeed by transitivity, this implies $\gamma(\gamma R_1\Diamond_k \gamma R_2)\subset \gamma(R_1\Diamond_k R_2)$ and the reverse inclusion is obvious.\\
We have $\gamma R_1\Diamond_k \gamma R_2=\ShiftDiam{k,n}{\gamma R_1}\cup \Dec{k-1}{\gamma R_2}$. From eq (\ref{CommGamma}) we obtain $\gamma R_1\Diamond_k \gamma R_2=(\gamma \ShiftDiam{k,n}{R_1})\cup (\gamma\Dec{k-1}{R_2})\subset \gamma(R_1\Diamond_k R_2)$, as expected.

\end{proof}

As an immediate consequence,  the transitive closure is compatible with the structure of operads that is:
\begin{lemma}
 If $\gamma(R_1)=\gamma(R'_1)$ and $\gamma(R_2)=\gamma(R'_2)$ then  $\gamma(R_1\Diamond_k R_2)=\gamma(R'_1\Diamond_k R'_2)$.
\end{lemma}

Denote by $\equiv_\gamma$ the equivalence relation on $\mathrm{RAS}^\leq$ defined by $R_1\equiv_\gamma R_2$ if and only if $\gamma(R_1)=\gamma(R_2)$. The quotient $\mathrm{RAS}^\leq/_{\equiv_\gamma}$ is automatically endowed with a structure of operad whose composition is deduced from  $\gamma$; we  denote by $\Diamond'$ the associated composition.\\
The equivalence classes are indexed by the element $\mathrm{POSet}^\leq:=\gamma(\mathrm{RAS}^\leq)$. Let us set $\Diamonddot_k:=\gamma\Diamond_k$ and $\Diamonddot:=\gamma\Diamond$.
 
  Immediately we obtain:
  \begin{proposition}
    The structure $(\mathrm{POSet}^\leq,\Diamonddot)$ is an operad isomorphic to $(\mathrm{RAS}^\leq/_{\equiv_\gamma},\Diamond')$.
  \end{proposition}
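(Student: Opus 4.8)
The plan is to exhibit the canonical bijection between the quotient $\mathrm{RAS}^\leq/_{\equiv_\gamma}$ and $\mathrm{POSet}^\leq$ and to check that it transports one partial composition onto the other. First I would define the map $\Psi\colon \mathrm{RAS}^\leq/_{\equiv_\gamma}\to\mathrm{POSet}^\leq$ by $\Psi([R])=\gamma(R)$. By the very definition of $\equiv_\gamma$ this does not depend on the chosen representative, so $\Psi$ is well defined; it is injective because $\gamma(R_1)=\gamma(R_2)$ forces $R_1\equiv_\gamma R_2$, and surjective because $\mathrm{POSet}^\leq=\gamma(\mathrm{RAS}^\leq)$ by definition. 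Thus $\Psi$ is a graded bijection.

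Recall that $(\mathrm{RAS}^\leq/_{\equiv_\gamma},\Diamond')$ is already an operad: the preceding lemma shows that $\equiv_\gamma$ is compatible with the composition $\Diamond_k$, and a compatible equivalence on an operad always induces a quotient operad (Section~\ref{operad}), with $\Diamond'$ given on classes by $[R_1]\Diamond'_k[R_2]=[R_1\Diamond_k R_2]$. The heart of the proof is then the intertwining identity $\Psi([R_1]\Diamond'_k[R_2])=\Psi([R_1])\Diamonddot_k\Psi([R_2])$. Its left-hand side equals $\gamma(R_1\Diamond_k R_2)$, while its right-hand side is $\gamma(R_1)\Diamonddot_k\gamma(R_2)=\gamma(\gamma(R_1)\Diamond_k\gamma(R_2))$ by the definition $\Diamonddot_k=\gamma\Diamond_k$. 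These two expressions coincide by the lemma $\gamma(\gamma R_1\Diamond_k\gamma R_2)=\gamma(R_1\Diamond_k R_2)$, which is exactly the algebraic fact doing all the work here. One also checks that the identity is preserved: the unit $\{(1,1)\}$ of $\mathrm{RAS}^\leq$ is already transitively closed, so its class maps to itself.

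Since $\Psi$ is a bijection intertwining the partial compositions and sending unit to unit, the operation $\Diamonddot_k$ on $\mathrm{POSet}^\leq$ is precisely the transport of $\Diamond'_k$ along $\Psi$; consequently $(\mathrm{POSet}^\leq,\Diamonddot)$ inherits all the operad axioms from $(\mathrm{RAS}^\leq/_{\equiv_\gamma},\Diamond')$ and $\Psi$ is an isomorphism of operads. I do not expect a genuine obstacle: the only substantive step is the intertwining identity, and it reduces immediately to the already-established lemma on $\gamma$ and $\Diamond_k$. The remaining work is the routine verification that $\Psi$ is a well-defined bijection and the observation that passing from the partial compositions $\Diamonddot_k$ to the full composition $\Diamonddot=\gamma\Diamond$ poses no difficulty, since $\Diamonddot$ is built from the $\Diamonddot_k$ exactly as $\Diamond'$ is built from the $\Diamond'_k$.
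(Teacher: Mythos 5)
Your argument is correct and is essentially the paper's: the proposition is stated there as an immediate consequence of the lemma $\gamma(\gamma R_1\Diamond_k\gamma R_2)=\gamma(R_1\Diamond_k R_2)$ together with the observation that the $\equiv_\gamma$-classes are indexed by $\mathrm{POSet}^\leq=\gamma(\mathrm{RAS}^\leq)$, which is precisely the bijection-plus-intertwining argument you spell out. The only cosmetic slip is that the identity of $\mathrm{RAS}^\leq_1$ is the full diagonal $\{(1,1),(2,2)\}$ on $\{1,2\}$ rather than $\{(1,1)\}$, but it is still fixed by $\gamma$, so nothing in your proof changes.
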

  
\subsection{Pseudotransitive tildes}
A multitilde is said \emph{pseudotransitive} if it is the image of a POSet by $\phi^{-1}$. The set of pseudotransitive multitildes will be denoted by
\[
{\rm PTT}:=\phi^{-1}({\rm POSet}^{\leq}).
\]
This set is endowed with an operad structure induced by the partial products
\[
T_1{\odot_k} T_2:=\phi^{-1}(\phi(T_1)\Diamonddot_k \phi(T_2))=\phi^{-1}\gamma\phi(T_1\circ_k T_2).
\]

For simplicity, set ${\tilde\gamma}:=\phi^{-1}\gamma\phi$. The equivalence relation $\equiv$ on $\mathcal T$ defined by $T_1\equiv T_2$ if and only if  $\tilde\gamma(T_1)=\tilde\gamma(T_2)$ is compatible with the operad structure. Indeed,

\[
\begin{array}{rcl}
\tilde\gamma(T_1\circ_k T_2)&=& \phi^{-1}\gamma\phi(T_1\circ_k T_2)\\
&=&  \phi^{-1}\gamma(\phi(T_1)\Diamond_k \phi(T_2))
\end{array}
\]
But, $\tilde\gamma(T)=\tilde\gamma(T')$ implies $\phi(T)\equiv_\gamma \phi(T')$. Since  $\equiv_\gamma$ is compatible with the composition in $\mathrm RAS^\leq$ we obtain
\[
\begin{array}{rcl}
\tilde\gamma(T_1\circ_k T_2)&=& \phi^{-1}\gamma(\phi(T'_1)\Diamond_k \phi(T'_2))\\
&=& \tilde\gamma(T'_1\circ_k T'_2).
\end{array}
\]
Hence, the quotient ${\mathcal T}/_\equiv$ is isomorphic to $({\rm PTT},\odot)$.

We explicitly describe  $\tilde\gamma$.
\begin{lemma}\label{pseudotrans}
The set $\tilde\gamma(T)$ is the smallest set $S$ such that the two following assertions are satisfied:
\begin{enumerate}
\item $T\subset S$ 
\item $(i,k), (k+1,j)\in S$ implies $(i,j)\in S$.
\end{enumerate}   
\end{lemma}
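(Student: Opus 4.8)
The plan is to show that $\tilde\gamma(T)=\phi^{-1}(\gamma(\phi(T)))$ coincides with the closure $c(T)$ of $T$ under the rule of item~2, where $c(T)$ denotes the smallest subset of $\Intervalle{1}{n}^2_\leq$ containing $T$ and stable under that rule. Such a smallest set exists, being the intersection of all stable supersets (the family is nonempty since $\Intervalle{1}{n}^2_\leq$ itself is stable, and it is closed under intersection). The bridge between the two descriptions is the elementary dictionary provided by $\phi$: a couple $(x,y)\in\mathcal T$ becomes the off-diagonal pair $(x,y+1)$, and the rule producing $(i,j)$ from $(i,k)$ and $(k+1,j)$ corresponds exactly to the transitive composition producing $(i,j+1)$ from $(i,k+1)$ and $(k+1,j+1)$. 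Before the two inclusions I would record two easy facts: $\phi^{-1}$ is monotone for inclusion, since it merely reads off the off-diagonal pairs $(a,b)$ (those with $a<b$) and returns $(a,b-1)$; and $\phi^{-1}(\phi(T))=T$.

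For the inclusion $c(T)\subseteq\tilde\gamma(T)$, I would verify that $\tilde\gamma(T)$ satisfies items~1 and~2. Item~1 follows from monotonicity: from $\phi(T)\subseteq\gamma(\phi(T))$ we get $T=\phi^{-1}(\phi(T))\subseteq\phi^{-1}(\gamma(\phi(T)))=\tilde\gamma(T)$. For item~2, suppose $(i,k),(k+1,j)\in\tilde\gamma(T)$. Unfolding the definition of $\phi^{-1}$, this means that the off-diagonal pairs $(i,k+1)$ and $(k+1,j+1)$ belong to $\gamma(\phi(T))$, whose transitivity then yields $(i,j+1)\in\gamma(\phi(T))$. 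Since $(i,k)$ and $(k+1,j)$ are valid couples we have $i\le k<k+1\le j$, hence $i<j$; thus $(i,j+1)$ is off-diagonal and $\phi^{-1}$ returns the valid couple $(i,j)\in\tilde\gamma(T)$, as required.

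The substantive direction is $\tilde\gamma(T)\subseteq c(T)$, and here the chain description of the transitive closure does the work. Any off-diagonal pair of $\gamma(\phi(T))$ is obtained by composing pairs of $\phi(T)$; discarding the diagonal links, which never advance, leaves a strictly increasing chain $a=a_0<a_1<\dots<a_\ell=c$ with each $(a_t,a_{t+1})$ an off-diagonal generator, that is $(a_t,a_{t+1}-1)\in T$. I would then prove by induction on $\ell\ge 1$ that $(a_0,a_\ell-1)\in c(T)$. The base case $\ell=1$ is $(a_0,a_1-1)\in T\subseteq c(T)$. For the step, the induction hypothesis applied to the subchain $a_0<\dots<a_{\ell-1}$ gives $(a_0,a_{\ell-1}-1)\in c(T)$, and since $(a_{\ell-1},a_\ell-1)\in T$ the two couples match the pattern $(i,k)=(a_0,a_{\ell-1}-1)$, $(k+1,j)=(a_{\ell-1},a_\ell-1)$ of rule~2, producing $(a_0,a_\ell-1)\in c(T)$. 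Applying $\phi^{-1}$ to an arbitrary element $(a,c)$ of $\gamma(\phi(T))$ therefore yields the couple $(a,c-1)\in c(T)$, whence $\tilde\gamma(T)\subseteq c(T)$ and equality follows.

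The one point requiring genuine care, which I expect to be the main obstacle, is the bookkeeping of the $\pm 1$ shift introduced by $\phi$: one must check that the head-to-tail matching demanded by transitivity, namely that the head $a_{t+1}$ of one edge equals the tail of the next, corresponds after subtracting the shift precisely to the ``$k$ then $k+1$'' adjacency of rule~2, and that the diagonal pairs of $\gamma(\phi(T))$ contribute no spurious couples under $\phi^{-1}$. Once this edge-versus-couple dictionary and the validity constraints $i\le j$ are pinned down, both inductions are routine.
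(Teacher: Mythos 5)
Your proof is correct, and its first half is exactly the paper's: you transport everything through the dictionary $\phi$, observe that $T\subseteq\tilde\gamma(T)$, and derive stability under rule~2 from the transitivity of $\gamma(\phi(T))$ after the $\pm1$ shift (the paper performs the same computation $(i,k+1),(k+1,j+1)\in\gamma(\phi(T))\Rightarrow(i,j+1)\in\gamma(\phi(T))\Rightarrow(i,j)\in\tilde\gamma(T)$). Where you diverge is in the minimality direction. The paper argues abstractly: it takes a set $S$ satisfying conditions 1 and 2 and appeals to the minimality of the transitive closure $\gamma(\phi(T))$ to conclude $\tilde\gamma(T)\subseteq S$ --- in effect claiming that $\phi(S)$, completed by the diagonal, is a transitive superset of $\phi(T)$, though the published paragraph is terse to the point of being hard to parse. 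You instead unwind the transitive closure into strictly increasing chains of off-diagonal generators and run an induction on chain length to land each closed pair inside the closure $c(T)$. The two arguments prove the same sub-claim; the paper's is shorter once one accepts that rule~2 is the $\phi$-conjugate of transitivity, while yours is more self-contained and makes explicit the two points the paper glosses over, namely that diagonal links can be discarded from chains without effect and that the head-to-tail matching of transitivity becomes the ``$k$ then $k+1$'' adjacency after the shift. Either is acceptable; yours is the more careful write-up of the half the paper rushes.
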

\begin{proof}
Since $\gamma(\phi(T))$ is the transitive closure of $\phi(T)$, it is the smallest set $R$ such that
\begin{enumerate}
\item $\phi(T)\in R$
\item $(i,k), (k,j)\in R$ implies $(i,j)\in R$.
\end{enumerate}
Hence, $\phi^{-1}(\gamma(\phi(T)))$ is such that if $(i,k), (k+1,j)\in \phi^{-1}(\gamma(\phi(T)))$ implies $(i,j)\in \phi^{-1}(\gamma(\phi(T)))$.\\
Indeed let $(i,k),\, (k+1,j)\in\tilde\gamma(T)$ then $(i,k+1),\ (k+1,j+1)\in\gamma(\phi(T))$. Since $\gamma(\phi(T))$ is transitive, we also have $(i,j+1)\in\gamma(\phi(T))$. The image of this pair is $\phi^{-1}(i,j+1)=(i,j)\in\tilde\gamma(T)$.\\
  Let $S$ be a subset of $\phi^{-1}(\gamma(\phi(T)))$ verifying
\begin{enumerate}
\item $T\subset S$ 
\item $(i,k), (k+1,j)\in S$ implies $(i,j)\in S$.
\end{enumerate} 
Suppose that $(i,j)\not\in \phi^{-1}(\gamma(\phi(T)))$, since $\phi$ is a bijection we obtain $\phi(i,j)\not\in \gamma(\phi(T))$. This contradicts the minimality of $\gamma(\phi(T))$ and prove the result.
\end{proof}

The set $\tilde\gamma(T)$ will be referred as the pseudotransitive closure of $T$.

\begin{lemma}\label{Vgamma2V}
For each tilde $T\in\mathcal T$,  $V(T)=V(\tilde\gamma(T))$ holds.
\end{lemma}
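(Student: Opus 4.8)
The plan is to reduce everything to the observation that the vector $v(S)$ depends on the free set $S$ only through its \emph{covered region} $\bigcup_{(x,y)\in S}\Intervalle{x}{y}$: by definition $v(S)_j=0$ precisely when $j$ lies in this union, so $v(S)$ is just the characteristic vector of the complement of the covered region. Consequently $V(T)$ is exactly the set of covered regions (read as boolean vectors) realizable by a free subset of $T$, and it suffices to prove that the free subsets of $T$ and those of $\tilde\gamma(T)$ realize the same family of covered regions.

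One inclusion is immediate. Since freeness is an intrinsic property of $S$ (pairwise disjoint intervals) and $T\subset\tilde\gamma(T)$, every $S\in\mathcal F(T)$ also lies in $\mathcal F(\tilde\gamma(T))$; hence $V(T)\subset V(\tilde\gamma(T))$.

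For the reverse inclusion I would build $\tilde\gamma(T)$ from $T$ one pair at a time and show that each addition leaves $V$ unchanged. Concretely, I would first prove the single-step lemma: if $(i,k),(k+1,j)\in S$ then $V(S\cup\{(i,j)\})=V(S)$. Given a free subset $F$ of $S\cup\{(i,j)\}$, if $(i,j)\notin F$ then $F$ is already a free subset of $S$; otherwise replace $(i,j)$ by the two pairs $(i,k)$ and $(k+1,j)$, both of which belong to $S$. The key point is that $\Intervalle{i}{k}$ and $\Intervalle{k+1}{j}$ partition $\Intervalle{i}{j}$ into two disjoint intervals, so the new set is still free and covers exactly the same positions as $F$; thus $v$ is preserved and the realized vector already lies in $V(S)$. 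Iterating this over the finitely many pairs adjoined when forming the pseudotransitive closure — legitimate because, by Lemma~\ref{pseudotrans}, $\tilde\gamma(T)$ is obtained from $T$ by repeatedly adjoining pairs $(i,j)$ forced by pairs $(i,k),(k+1,j)$ already present — yields $V(\tilde\gamma(T))\subset V(T)$, which together with the previous paragraph completes the proof.

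The main obstacle, more bookkeeping than depth, is the freeness check in the replacement step: one must confirm that splitting $(i,j)$ into $(i,k)$ and $(k+1,j)$ neither overlaps these two sub-intervals with each other (guaranteed by $k<k+1$) nor with any interval of $F\setminus\{(i,j)\}$ (guaranteed because their union is $\Intervalle{i}{j}$, which was already disjoint from those intervals by freeness of $F$). An equivalent and perhaps cleaner route is to characterise $\tilde\gamma(T)$ directly as the set of pairs $(i,j)$ admitting a \emph{chain decomposition} $(a_0,b_0),\dots,(a_r,b_r)\in T$ with $a_0=i$, $b_r=j$ and $a_{t+1}=b_t+1$, and then to refine each pair of a free subset of $\tilde\gamma(T)$ into its chain; the same disjointness argument shows the refinement is a free subset of $T$ with identical covered region.
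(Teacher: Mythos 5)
Your proof is correct and follows essentially the same route as the paper: a single-step lemma showing $V(T\cup\{(i,j)\})=V(T)$ when $(i,k),(k+1,j)\in T$, proved by replacing $(i,j)$ in a free subset with the two pairs $(i,k)$ and $(k+1,j)$ (which preserves both freeness and the covered region), followed by iteration up to the pseudotransitive closure. Your explicit remark that $v(S)$ depends only on $\bigcup_{(x,y)\in S}\Intervalle{x}{y}$ is a clean way to phrase the key observation the paper verifies coordinatewise.
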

\begin{proof}
Suppose that $(i,k),\, (k+1,j)\in T$. Let us prove  $V(T\cup\{(i,j)\})=V(T)$. Suppose $(i,j)\not\in T$ (otherwise the result is obvious). We compare the set $E:=\{v(S): S\in \mathcal F(T)\}$ and $F:=\{v(S):S\in\mathcal F(T\cup\{(i,j)\})\}$. Obviously, $\mathcal F(T)\subset \mathcal F(T\cup\{(i,j)\})$ hence $E\subset F$. Let us prove the reverse inclusion and let $S\in \mathcal F(T\cup\{(i,j)\})$. If $(i,j)\not\in S$ then $S\in\mathcal F(T)$ and then $v(S)\in E$. Now suppose $(i,j)\in S$ and set $S'=S\cup\{(i,k),(k+1,j)\}\setminus \{(i,j)\}$. Obviously $S'\in\mathcal F(T)$.  Set $v(S)=(s_1,\dots,s_n)$ and $v(S')=(s'_1,\dots,s'_n)$. We have $s_l=s'_l$ if $l\not\in \llbracket i,j\rrbracket$ and $0$ otherwise. But $s'_l=0$ when $l\in\llbracket i,k\rrbracket\cup \llbracket k+1,j\rrbracket=\llbracket i,j\rrbracket$. So $v(S)=v(S')\in E$. 
\end{proof}
\noindent As a consequence, we define an action of ${\rm POSet}^{\leq}$ and $\rm PTT$ on languages:

\begin{theorem}
The sets $2^{\Sigma^*}$ and ${\mathrm Reg}(\Sigma^*)$ are ${\rm POSet}^{\leq}$-module and $\rm PTT$-module.
\end{theorem}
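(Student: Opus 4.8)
The plan is to obtain both module structures from the $\mathcal{T}$-module structure of Corollary~\ref{cormod} by exploiting the fact, recorded in Lemma~\ref{Vgamma2V}, that the language denoted by a multitilde is unchanged under pseudotransitive closure. Recall that $2^{\Sigma^*}$ and $\mathrm{Reg}(\Sigma^*)$ are $\mathcal{T}$-modules with action $T(L_1,\dots,L_k)=V(T)(L_1,\dots,L_k)$, that $({\rm PTT},\odot)$ is isomorphic to the quotient $\mathcal{T}/_\equiv$, and that $\phi$ restricts to an operad isomorphism from $({\rm PTT},\odot)$ onto the operad $\mathrm{POSet}^\leq$. It therefore suffices to build a ${\rm PTT}$-module structure; the ${\rm POSet}^\leq$-module structure is then transported along $\phi$ by setting $P(L_1,\dots,L_k):=\phi^{-1}(P)(L_1,\dots,L_k)$ for $P\in\mathrm{POSet}^\leq$.

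Since ${\rm PTT}$ is a subset of $\mathcal{T}$, I take as candidate action the restriction of the $\mathcal{T}$-action: for $P\in{\rm PTT}$, I set $P(L_1,\dots,L_k)=V(P)(L_1,\dots,L_k)$. The only axiom to check is compatibility with the partial composition $\odot_k$, that is
\[
(P_1\odot_k P_2)(L_1,\dots,L_{m+n-1})=P_1\!\left(L_1,\dots,L_{k-1},P_2(L_k,\dots,L_{k+n-1}),L_{k+n},\dots,L_{m+n-1}\right)
\]
for $P_1\in{\rm PTT}_m$ and $P_2\in{\rm PTT}_n$. Because $2^{\Sigma^*}$ is a $\mathcal{T}$-module and $\circ_k$ is the composition of $\mathcal{T}$, the right-hand side equals $(P_1\circ_k P_2)(L_1,\dots,L_{m+n-1})=V(P_1\circ_k P_2)(L_1,\dots,L_{m+n-1})$.

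For the left-hand side I use that $P_1\odot_k P_2=\tilde\gamma(P_1\circ_k P_2)$ by definition of $\odot_k$, whence $(P_1\odot_k P_2)(L_1,\dots,L_{m+n-1})=V(\tilde\gamma(P_1\circ_k P_2))(L_1,\dots,L_{m+n-1})$. Here is the crux: by Lemma~\ref{Vgamma2V} we have $V(\tilde\gamma(P_1\circ_k P_2))=V(P_1\circ_k P_2)$, so the two expressions coincide and the axiom holds; the axiom for the full composition $\odot$ then follows by iterating the partial one. The main obstacle is exactly this hidden pseudotransitive closure inside $\odot_k$: without Lemma~\ref{Vgamma2V} the composition of ${\rm PTT}$ and that of $\mathcal{T}$ would not induce the same operation on languages, and the whole reduction would collapse. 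Finally, since a finite union of catenations of regular languages is regular, each operator of ${\rm PTT}$, and hence of $\mathrm{POSet}^\leq$, maps $\mathrm{Reg}(\Sigma^*)$ into itself, so the same action makes $\mathrm{Reg}(\Sigma^*)$ a module over both operads.
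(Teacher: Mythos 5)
Your proposal is correct and follows essentially the same route as the paper: both arguments rest on Corollary~\ref{cormod} for the $\mathcal{T}$-module structure and on Lemma~\ref{Vgamma2V} ($V=V\circ\tilde\gamma$) to handle the pseudotransitive closure hidden in $\odot_k$, then transport to $\mathrm{POSet}^\leq$ via the operad isomorphism. The paper phrases this as the action descending to the quotient $\mathcal{T}/_\equiv\cong{\rm PTT}$ because equivalent tildes act identically, whereas you verify the module axiom for $\odot_k$ directly; these are the same argument in two guises.
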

\begin{proof}
From Corollary \ref{cormod}, $2^{\Sigma^*}$ and ${\mathrm Reg}(\Sigma^*)$ are $\mathcal T$-module.\\
Note if $T_1\equiv T_2$ are two tildes of $\mathcal T$ then, by Lemma \ref{Vgamma2V}, $T_1$ and $T_2$ have the same action on  $2^{\Sigma^*}$ and ${\mathrm Reg}(\Sigma^*)$. These set are $\rm PTT$-modules. Since $\rm PTT$ is isomorphic to ${\rm POSet}^{\leq}$ as an operad, they are also ${\rm POSet}^{\leq}$-modules.
\end{proof}

  
  
  
  

 
\section{Consequences and perspectives\label{SecCoPer}}
 \subsection{Enumeration}

The purpose of this section is to compute an upper bound of the number $n(L_1,\dots,L_k)$ of languages  that can be obtained by the action of tilde on the $k$-tuple of languages $(L_1,\dots,L_k)$.\\
 First, Lemma \ref{Vgamma2V} implies that $n(L_1,\dots,L_k)$ equals to the number of languages obtained by applying a $\mathrm PTT$ on $(L_1,\dots,L_k)$.\\
From the previous section, the number of $\mathrm PTT$ with arity $k$ equals the numbers $p_k$ of POSets on $\{1,\dots,k+1\}$ that are contained in the usual linear order on integers. So, 
\begin{equation}
n(L_1,\dots,L_k)\leq p_k.
\end{equation}
The sequence of $p_k$ has no known closed form but the first values can be found on \cite{Sloane} (A006455); the first values are
 {
$$\begin{array}{l} 2, 7, 40, 357, 4824, 96428, 2800472, 116473461, 6855780268,\\ 565505147444, 64824245807684,\dots \end{array}$$
}
Note this is also the number of $(k+1)\times(k+1)$ upper triangular idempotent boolean matrices with all diagonal entries $1$.\\
 Let us show that the bound is reached for $L_1=\{a_1\},\dots, L_k=\{a_k\}$ where $a_1,\dots, a_k$ are distinct letters.\\
Let $P_1\neq P_2$ be two $\mathrm PTT$s with arity $k$ and $(i,j)\in P_1\setminus P_2$. We have $\{(i,j)\}\in {\mathcal F}(P_1)$ and then $(1^{i-1},0^{j-i},1^{n-j+1})\in V(P_1)$.\\
Suppose $(1^{i-1},0^{j-i},1^{n-j+1})\in V(P_2)$. This implies that there exists a set $\{(i,k_1),(k_1+1,k_2),\dots,(k_m+1,j)\}\in {\mathcal F}(P_2)$ for some $m$.  So $\{(i,k_1),(k_1+1,k_2),\dots,(k_m+1,j)\}\subset P_2$ and from Lemma \ref{pseudotrans} we obtain $(i,j)\in P_2$. This contradicts our hypothesis. Then, 
\begin{corollary}
Let $a_1,\dots, a_k$ be $k$ distinct letters. We have
\[
n(\{a_1\},\dots,\{a_k\})=p_k.
\]
\end{corollary}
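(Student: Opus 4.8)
The plan is to prove the two inequalities separately. The upper bound $n(\{a_1\},\dots,\{a_k\})\le p_k$ is already in hand: by Lemma~\ref{Vgamma2V} every tilde acts on a $k$-tuple exactly as its pseudotransitive closure, and the number of pseudotransitive tildes of arity $k$ is $p_k$, so at most $p_k$ distinct languages can be produced. The whole content of the statement is therefore the reverse inequality, namely that for singletons over distinct letters the $p_k$ distinct $\mathrm{PTT}$s give pairwise distinct languages.

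First I would isolate an elementary injectivity fact: when the $a_i$ are distinct, the assignment $v\mapsto a_1^{v_1}\cdots a_k^{v_k}$ is injective on $\mathbb B^k$, since the product of the singletons $\{a_i\}$ is a single word whose set of occurring letters is exactly $\{a_i : v_i=1\}$, and distinctness lets one read $v$ back off the word. Consequently, for any tilde $T$ one has $T(\{a_1\},\dots,\{a_k\})=\{a_1^{v_1}\cdots a_k^{v_k} : v\in V(T)\}$, a language in bijection with the boolean set $V(T)$; hence two tildes agree on this tuple if and only if $V$ assigns them the same set.

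It then remains to show that distinct $\mathrm{PTT}$s $P_1\ne P_2$ satisfy $V(P_1)\ne V(P_2)$, which is the computation carried out just before the statement. Choosing $(i,j)\in P_1\setminus P_2$, the singleton $\{(i,j)\}$ is a free subset, so the vector that is $0$ exactly on $\llbracket i,j\rrbracket$ lies in $V(P_1)$; if it also lay in $V(P_2)$ it would come from a free subset of $P_2$ tiling $\llbracket i,j\rrbracket$, i.e. a chain $\{(i,k_1),(k_1+1,k_2),\dots,(k_m+1,j)\}\subseteq P_2$, and since $P_2$ equals its own pseudotransitive closure, Lemma~\ref{pseudotrans} would force $(i,j)\in P_2$, contradicting the choice. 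Combining the two injectivity facts yields distinct languages for distinct $\mathrm{PTT}$s, hence $n(\{a_1\},\dots,\{a_k\})\ge p_k$ and therefore equality.

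The hard part is the separation $V(P_1)\ne V(P_2)$ via Lemma~\ref{pseudotrans}: one must check both that a witnessing vector of $V(P_2)$ can arise only from a covering chain of pairs, and that the pseudotransitive closure collapses exactly such a chain to the single pair $(i,j)$. By contrast, the injectivity of $v\mapsto a_1^{v_1}\cdots a_k^{v_k}$ and the upper bound from Lemma~\ref{Vgamma2V} are routine.
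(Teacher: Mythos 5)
Your proof is correct and follows essentially the same route as the paper: the upper bound via Lemma~\ref{Vgamma2V} and the count of $\mathrm{PTT}$s, and the lower bound by separating $V(P_1)$ from $V(P_2)$ through a pair $(i,j)\in P_1\setminus P_2$, the freeness of $\{(i,j)\}$, and the chain argument closed off by Lemma~\ref{pseudotrans}. The only difference is that you make explicit the injectivity of $v\mapsto a_1^{v_1}\cdots a_k^{v_k}$ for distinct letters, a step the paper leaves implicit.
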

\begin{example}
\rm There are $7$ languages obtained by applying $\mathrm PTT$ on a pair of letters $(\{a\},\{b\})$:
\[
\begin{array}{cc}
\mathrm tildes&languages\\\hline
\emptyset&ab\\
(1,1)&ab,\,b\\
(1,1), (1,2)&ab,\, b,\, \epsilon\\
(2,2)&ab,\,a\\
(2,2),\, (1,2)&ab,\ a,\ \epsilon\\
(1,1),\,(2,2),\,(1,2)&ab,\ a,\ b,\ \epsilon\\
(1,2)&ab,\, \epsilon
\end{array}
\]
\end{example}
\subsection{Finite languages}
Every finite languages can be generated by a multitilde acting on a sequences of elements in $\Sigma_{0}:=\{\{a\}\}_{a\in \Sigma}\cup \{\emptyset\}$.
\begin{proposition}
For any finite language $L$ over an alphabet $\Sigma$, it exists a multitilde $T\in\mathcal T_k$ and a $k$-tuples $(L_1,\dots,L_k)$ for some integer $k$, with $L_1,\dots, L_k\in\Sigma_{0}$ such that
\[
L=T(L_1,\dots,L_k).
\]
\end{proposition}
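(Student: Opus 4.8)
The plan is to exhibit $L$ as the action of a single multitilde on a tuple of singletons and the empty language, by routing through the conversion of star-free regular expressions into multitilde form recorded in Table~\ref{tab regl modif mtb} and then reading the resulting expression equality through the identity $\Pp{T}{L_1,\dots,L_k}=T(L_1,\dots,L_k)$.

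First I would note that every finite language is star-free. If $L=\emptyset$, set $E=\emptyset$; otherwise write $L=\{w_1,\dots,w_m\}$ and take $E=w_1+\cdots+w_m$, where each nonempty word $w=b_1\cdots b_r$ with $b_i\in\Sigma$ is read as the catenation $b_1\cdot b_2\cdots b_r$ and the empty word as $\varepsilon$. Then $E$ is a regular expression using only $\emptyset$, $\varepsilon$, letters, $+$ and $\cdot$ (no star), and $L(E)=L$.

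Next I would prove by structural induction on a star-free expression $E$ that $E\equiv\Pp{T}{e_1,\dots,e_k}$ for some $T\in\mathcal{T}_k$ with every $e_i\in\Sigma\cup\{\emptyset\}$. The cases $\emptyset$, $\varepsilon$, $a$ are precisely the left column of Table~\ref{tab regl modif mtb}, and in each the arguments lie in $\Sigma\cup\{\emptyset\}$. For $E=E'_1\cdot E'_2$ or $E=E'_1+E'_2$, the induction hypothesis gives single-multitilde forms of $E'_1$ and $E'_2$ with letter/$\emptyset$ arguments; the right column of Table~\ref{tab regl modif mtb} rewrites $E$ as an outer multitilde ($\Pp{\emptyset}{\cdot,\cdot}$ for catenation, $\Pp{(1,2),(2,3)}{\cdot,\emptyset,\cdot}$ for union) whose slots hold these inner multitilde expressions. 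Applying the language-preserving composition identity (which underlies Theorem~\ref{TildeOperad}) once per inner slot rewrites each substitution of an inner multitilde $\Ppn{T'}$ into slot $k$ of $\Ppn{T}$ as a single multitilde $\Ppn{T\circ_k T'}$ acting on the concatenated argument lists; the new argument list is the concatenation of the two inner lists (with the extra $\emptyset$ kept between them in the union case), so all its entries remain in $\Sigma\cup\{\emptyset\}$. This closes the induction and yields a single multitilde form for $L$.

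Finally I would set $L_i:=L(e_i)$, so that $L_i=\{a\}\in\Sigma_0$ when $e_i=a\in\Sigma$ and $L_i=\emptyset\in\Sigma_0$ when $e_i=\emptyset$. By the definition of the language of an EMTRE together with the Proposition $\Pp{T}{L_1,\dots,L_k}=T(L_1,\dots,L_k)$, this gives $L=L(\Pp{T}{e_1,\dots,e_k})=\Pp{T}{L_1,\dots,L_k}=T(L_1,\dots,L_k)$, which is the desired conclusion. The only delicate point is the flattening step: the partial composition $\circ_k$ relabels the inner tilde through $\Shiftn$ and $\Decn$, so one must confirm that the flattened argument list appears in the intended order and that the equivalence is preserved; but this is exactly what the composition identity and Theorem~\ref{TildeOperad} guarantee, so it is bookkeeping rather than a genuine obstacle.
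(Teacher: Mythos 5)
Your proof is correct and follows essentially the same route as the paper: both reduce to the base cases $\emptyset$, $\varepsilon$, $a$ and to the union/catenation multitildes of Table~\ref{tab regl modif mtb}, then invoke the operadic composition (Theorem~\ref{TildeOperad}) to collapse the nested multitildes into a single one. The only cosmetic difference is that you work at the level of star-free expressions and structural induction, whereas the paper argues directly on languages via the $\mathcal T$-module structure from Corollary~\ref{cormod}.
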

\begin{proof}
Note first, the set of finite languages is a $\mathcal T$-module, since it is contained in $2^{\Sigma^*}$, which by Corollary \ref{cormod} is a $\mathcal T$-module, and is closed by the actions of $\mathcal T$. 
Straightforwardly, we have $\emptyset=\emptyset_1(\emptyset)$, $\{\varepsilon\}=\{(1,1)\}(\emptyset)$ and $\{a\}=\emptyset_1(\{a\})$ for any $a\in \Sigma$.\\
 Hence, since $\mathcal T$ is an operad (Theorem \ref{TildeOperad}), it suffices to prove that the union and the catenation of two languages can be obtained by using a multitilde. We verify easily that
\[
L_1L_2=\emptyset_{2}(L_1,L_2)\mbox{ and } L_1\cup L_2=\{(1,2),(2,3)\}(L_1,\emptyset,L_2).
\]
This ends the proof.
\end{proof}

Let us give some examples.
\begin{example}\rm
The set of all subwords of a given word $a_1\cdots a_k$, is easily expressed as a multitilde by
\[
\left\{(i,i):i\in \llbracket 1,k\rrbracket\right\}(a_1,\dots,a_k).
\]
Note  the corresponding POSet is the natural order relation on $\llbracket 1,k+1\rrbracket$.
\end{example}
\begin{example}
\rm The set of all the prefixes (resp. suffices) of a given word $a_1\dots a_k$ admits a nice expression in terms of multitildes:
\[
P_k:=\left\{(i,k):i \in \llbracket 1,k\rrbracket\right\}(a_1,\dots,a_k) \, \left(\mbox{ resp.} 
S_k:=\left\{(1,i):i \in \llbracket 1,k\rrbracket\right\}(a_1,\dots,a_k)\right).
\]
The corresponding POSets are\\
\begin{center}
\begin{tikzpicture}
\tikzstyle{VertexStyle}=[
]

\SetUpEdge[lw = 1.5pt,
 style={post},
labelstyle={sloped}
]
\tikzset{EdgeStyle/.style={post}}

\Vertex[x=0, y=0, 
 L={$1$}
]{p1}
\Vertex[x=0, y=1, 
 L={$2$}
]{p2}
\Vertex[x=0, y=2, 
 L={$\vdots$}
]{dd2}
\Vertex[x=0, y=3, 
 L={$k$}
]{pk}
\Vertex[x=2, y=1.5, 
 L={$k+1$}
]{pk+1}

\Edge[color=black](p1)(pk+1)
\Edge[color=black](p2)(pk+1)
\Edge[color=black](pk)(pk+1)

\Vertex[x=3, y=1.5, 
 L={$resp.$}
]{resp}

\Vertex[x=4, y=1.5, 
 L={$1$}
]{q1}
\Vertex[x=6, y=0, 
 L={$2$}
]{q2}
\Vertex[x=6, y=1, 
 L={$3$}
]{q3}
\Vertex[x=6, y=2, 
 L={$\vdots$}
]{dd3}
\Vertex[x=6, y=3, 
 L={$k+1$}
]{qk+1}
\Edge[color=black](q1)(qk+1)
\Edge[color=black](q1)(q2)
\Edge[color=black](q1)(q3)

\end{tikzpicture}
\end{center}
\end{example}
\begin{example}\rm
As a consequence, the set of all the factors of the word $a_1\dots a_k$ is obtained by the action of $F_k:=P_k\cup S_k$. Graphically, the corresponding POSet is:\\

\begin{center}
\begin{tikzpicture}
\tikzstyle{VertexStyle}=[
]

\SetUpEdge[lw = 1.5pt,
 style={post},
labelstyle={sloped}
]
\tikzset{EdgeStyle/.style={post}}

\Vertex[x=0, y=1.5, 
 L={$1$}
]{p1}

\Vertex[x=2, y=0, 
 L={$2$}
]{p2}
\Vertex[x=2, y=1, 
 L={$3$}
]{p3}
\Vertex[x=2, y=2, 
 L={$\vdots$}
]{dd2}
\Vertex[x=2, y=3, 
 L={$k$}
]{pk}
\Vertex[x=4, y=1.5, 
 L={$k+1$}
]{pk+1}

\Edge[color=black](p1)(p2)
\Edge[color=black](p1)(p3)
\Edge[color=black](p1)(pk)
\Edge[color=black](p1)(pk+1)
\Edge[color=black](p2)(pk+1)
\Edge[color=black](p3)(pk+1)
\Edge[color=black](pk)(pk+1)
\end{tikzpicture}
\end{center}
For instance, consider the multitilde 
\[F_3:=\{(1,1),(1,2),(1,3),(2,3),(3,3)\}\]
its free subsets are
\[\{\emptyset\}\cup \{\{c\}:c\in F_3\}\cup \{\{(1,1),(2,3)\},\{(1,2),(3,3)\},\{(1,1),(3,3)\}\}.\]
The corresponding boolean vectors are
\[
(1,1,1), (0,1,1), (0,0,1), (0,0,0), (1,0,0), (1,1,0), (0,1,0).
\]
When acting on $(\{a\},\{b\},\{c\})$ they respectively give  $\{abc\}$, $\{bc\}$, $\{c\}$, $\{\varepsilon\}$, $\{a\}$, $\{ab\}$ and $\{b\}$.
\end{example}
\noindent
\begin{example}\rm
More generally, we have the following property:
\\
If $L=T(L_1,\dots,L_k)$ with $L_i\in \{\{a\}:a\in\Sigma\}\cup\{\emptyset\}$ we have
\begin{enumerate}
 \item The set of the prefixes of any words in $L$, ${\rm Pref}(L)$ is a subset of $(T\cup P_k)(L_1,\dots,L_k)$
  \item The set of the suffices of any words in $L$, ${\rm Suff}(L)$ is a subset of $(T\cup S_k)(L_1,\dots,L_k)$.
 \item The set of the factors of any words in $L$, ${\rm Fact}(L)$ is a subset of $ (T\cup F_k)(L_1,\dots,L_k)$.
\end{enumerate}
One has only to show the first assertion. Indeed, the second point  is obtained by symmetry from the first and the third is simply the composition of the first and the second.\\
The free subsets of $T\cup P_k$ are
\[
\mathcal F(T\cup P_k)=\mathcal F(T)\cup \{S'=S\cup \{(i,k)\}:i\in\llbracket 1,k\rrbracket, S\in\mathcal F(T),\ S' \mbox{ is free}\}.
\]
Consider a proper prefix $p$ of word $w=ps\in L$. Let $I\subset \llbracket 1,k\rrbracket$ such that $i\in I$ if and only if $L_i\neq \emptyset$. Without loss of generality, we suppose that all the $L_i$ for $i\in I$ are distinct.  Set $s=a_is'$ with $a_i\in\Sigma$ and let $j$ be the unique integer such that $L_j=\{a_j\}$. Since $w\in L$, it exists a free list $S$ such that $v(S)(L_1,\dots,L_k)=\{w\}$. Furthermore, $S'=\{(j_1,j_2): (j_1,j_2)\in S, \llbracket j_1,j_2\rrbracket\subset \llbracket 1,j\rrbracket\}\subset \mathcal F(T)$. Indeed, there is no pair $(j_1,j_2)$ in $S$ such that $j\in \llbracket j_1,j_2\rrbracket$ (otherwise the letter  $a_j$ does not appear in $w$). Hence, $S$ is composed only by pairs $(j_1,j_2)\in T$ such that $j_2<j$ or $j<j_1$. It follows that $S'=\{(j_1,j_2):j_2<j\}$ is a free list of $T$ because $S$ is free. So $S'\cup \{(j+1,k)\}\subseteq T\cup P_k$ is free and $v(S'\cup \{(j+1,k)\})(L_1,\dots,L_k)=v(S')(L_1,\dots,L_j)=p$. It follows that $p\in (T\cup P)(L_1,\dots,L_k)$ and our claim.\\ 
The inclusions are strict in the general case. For instance, consider $\emptyset_4(\{a\},\{b\},\{c\},\emptyset)=\emptyset$ and $(\emptyset_4\cup P_4)(\{a\},\{b\},\{c\},\emptyset)= P_4(\{a\},\{b\},\{c\},\emptyset)=\{\varepsilon,a,ab,abc\}$.\\
Note if each $L_i\neq\emptyset$ the inclusions are equalities. Although this is a sufficient condition, it is not necessary as shown by $\{(1,3),(3,5)\}(\{a\},\{b\},\emptyset,\{c\},\{d\})=\{ab,cd\}$ and
\[
(\{(1,3),(3,5)\}\cup P_5)(\{a\},\{b\},\emptyset,\{c\},\{d\})=
\{\varepsilon,\{a\}, \{ab\}, \{c\}, \{cd\} \}={\mathrm{Pref}}(\{ab,cd\}).
\]
\end{example}


\subsection{Regular languages}
In the previous section, we showed that each finite language can be obtained by the action of a multitilde on a $k$-tuple in $\Sigma_0^k$. As a consequence, any regular languages can be written as a combination of multitildes and stars acting on a $k$-tuple in $\Sigma_0^k$.\\
Consider the smallest operad $\mathcal T^\star$ containing $\mathcal T$ together a new formal $1$-ary operator denoted $\star$. The elements of $\mathcal T^\star$ are trees whose nodes are multitildes or $\star$ and are such that the roots of the subtrees immediately issued of a multitilde are not  multitildes (indeed, the composition $\circ_i$ of two multitildes gives an other multitilde).\\ To define properly an action of $\mathcal T^\star$ on languages, it suffices to define the action  of the operator $\star$ which is naturally $\star(L)=L^*$. Note $\star(\star(L))=(L^*)^*=L^*=\star(L)$. To simplify the operators we introduce the smallest equivalence relation $\equiv$ compatible with the composition such that $\star \circ \star\equiv \star$, and the operad $T^\star=\mathcal T^\star/_\equiv$. Straightforwardly, the following result holds:
\begin{proposition}
\begin{enumerate}
 \item  The sets $2^{\Sigma^*}$ and $Reg(\Sigma^*)$ are both $T^\star$ and $\mathcal T^\star$-modules.
 \item Each regular language can be obtained by the action of an element in $T^\star$ on a $k$-tuple in $\Sigma_0^k$.
\end{enumerate}
\end{proposition}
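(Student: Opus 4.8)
The plan for part (1) is to build the action of $\mathcal{T}^\star$ from the actions of its generators and then pass to the quotient. By Corollary~\ref{cormod} we already have the $\mathcal{T}$-module structure, that is, an operad morphism from $\mathcal{T}$ into ${\rm Map}_{2^{\Sigma^*}}$; I set the action of the new generator by $\star(L):=L^*$. Since $\mathcal{T}^\star$ is by definition the smallest operad containing $\mathcal{T}$ and $\star$, every one of its elements is a composition of multitildes and copies of $\star$, and assigning to such a composition the corresponding composite of the above operations yields an operad morphism $\mathcal{T}^\star\rightarrow{\rm Map}_{2^{\Sigma^*}}$. Well-definedness amounts to checking that the two sources of identifications in $\mathcal{T}^\star$ are respected: the operad axioms hold automatically because ${\rm Map}_{2^{\Sigma^*}}$ is an operad, while the collapse of a multitilde grafted onto a multitilde into a single multitilde is respected precisely because $\mathcal{T}$ is already a $\mathcal{T}$-module, i.e. the morphism of Corollary~\ref{cormod} sends $T_1\circ_k T_2$ to the composite of the actions of $T_1$ and $T_2$. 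This equips $2^{\Sigma^*}$ with a $\mathcal{T}^\star$-module structure. As multitildes preserve regularity (the proposition preceding Corollary~\ref{cormod}) and the Kleene star of a regular language is regular, the submodule ${\rm Reg}(\Sigma^*)$ is a $\mathcal{T}^\star$-module as well.

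To descend to $T^\star=\mathcal{T}^\star/_\equiv$, I observe that the generating relation $\star\circ\star\equiv\star$ is respected by the action: for every language $L$ one has $(L^*)^*=L^*$, so $\star\circ\star$ and $\star$ act identically. Since $\equiv$ is the smallest congruence generated by this relation, the morphism $\mathcal{T}^\star\rightarrow{\rm Map}_{2^{\Sigma^*}}$ is constant on $\equiv$-classes and therefore factors through $T^\star$; this gives the $T^\star$-module structures on $2^{\Sigma^*}$ and ${\rm Reg}(\Sigma^*)$.

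For part (2) I would argue by structural induction on a regular expression denoting the target language. The base cases $\emptyset$, $\{\varepsilon\}$, $\{a\}$ together with binary union and catenation are exactly the cases treated in the proof of the preceding proposition on finite languages, where each is realized by an explicit multitilde (hence by an element of $\mathcal{T}^\star$, and so of $T^\star$). The one genuinely new case is the star: if, by the induction hypothesis, a subexpression denotes $P(L_1,\dots,L_k)$ for some $P\in T^\star$ and some tuple $(L_1,\dots,L_k)\in\Sigma_0^k$, then its star equals $\star\bigl(P(L_1,\dots,L_k)\bigr)=(\star\circ_1 P)(L_1,\dots,L_k)$, and $\star\circ_1 P$ is again an element of $T^\star$. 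Grafting the pieces together along the operad composition $\circ$, following the tree shape of the expression, produces a single element of $T^\star$ acting on the concatenated tuple of leaves, each of which lies in $\Sigma_0$; this establishes the claim.

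The step I expect to require the most care is the well-definedness of the action in part (1). One must verify simultaneously that the operad axioms and the merging of a multitilde sitting directly above another multitilde are both honored by the proposed assignment on trees. The first is automatic, but the second genuinely relies on the compatibility furnished by Corollary~\ref{cormod}: without the identity $V(T_1\circ_k T_2)=V(T_1)\circ_k V(T_2)$ underpinning it, the assignment would fail to be constant on the fibers of the quotient map from the free operad, and no action would result.
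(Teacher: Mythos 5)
Your proof is correct and follows essentially the same route the paper intends: the paper states this proposition as "straightforward" after having already supplied all the ingredients you use (the $\mathcal T$-module structure of Corollary~\ref{cormod}, the action $\star(L)=L^*$ with $(L^*)^*=L^*$ justifying the quotient by $\star\circ\star\equiv\star$, and the realization of $\emptyset$, $\varepsilon$, letters, union and catenation by explicit multitildes from the finite-languages proposition). You have merely written out the well-definedness and induction details that the authors leave implicit.
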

Similarly to the operad $\mathcal T$, there is a very interesting underlying combinatoric structure  to the operad $T^\star$ involving pairs of relations and a generalization of the dissections of polygons. This study is differed to a forthcoming paper.

\section{Conclusion}

In this paper, we   introduced several operads which allow us to represent multitilde operations defined  by one of the authors in \cite{CCM11e}. Each of these operads plays a specific role in the understanding of these tools. For instance, the operad $\mathcal T$ encodes the structure of composition of the multitilde operators. The operad $\mathcal B$ explains how the multitildes act on languages. Finally, the operads $\mathrm RAS^\leq$ and $\mathrm POSet^\leq$ establish a relation between these objects and well known combinatorial structures. This formalizes the equivalence of two multitildes and enumerates the different inequivalent multitildes.

This study is a first step towards the construction of a new combinatorial approach of the theory of languages. First, regular languages can be seen as the result of the action of an element of $T^\star$ on a $k$-tuples of letters. Second, another action of $\mathcal T$ on languages have been defined in \cite{CCM11e} and combined with the action of multitildes in \cite{CCM10}. Similar operadic constructions can be defined for these operators.  These two points will be studied in forthcoming papers.

The multitilde have been introduced in the aim to  enlarge the class of automata which can be represented as an expression using a number of symbols which is a linear function of the number of states. It remains to study the algorithmic improvement which can be brought by the different structures of operads. 

\bibliography{biblio}

\end{document}